\providecommand{\keywords}[1]{\textbf{\textit{Keywords:}} #1}
\documentclass[12pt]{article}
\title{Stack and register complexity of radix conversions}
\author{Motoya Machida\footnote{Tennessee Tech.\ University, email: {\tt mmachida@tntech.edu}}\and 
Alexander Y.~Shibakov\footnote{Tennessee Tech.\ University, email: {\tt ashibakov@tntech.edu}, Corresponding author}}
\usepackage{amsthm}
\usepackage{amsmath}
\usepackage{amssymb}
\begin{document}
\maketitle
\begin{abstract}
We investigate the question of computational resources (such as stacks
and counters) necessary to perform radix conversions. To this end it
is shown that no PDA can compute the significand of the best
$n$-digit floating point approximation of a power of an
incommensurable radix. This extends the results of W.~Clinger. We also
prove that a two counter machine with input is capable of such conversions. On
the other hand we note a curious asymmetry with respect to the order in
which the digits are input by showing that a two counter machine can
decode its input online if the digits are presented in the
most-to-least significant order while no such machine can decode its input in this
manner if the digits are presented in the least-to-most significant
order. Some structural results about two counter machines (with input) are also
established.
\end{abstract}
\keywords{floating point arithmetic, radix conversions, push-down automata, two counter machines}
\newtheorem{lemma}{Lemma}
\newtheorem{corollary}{Corollary}
\newtheorem{proposition}{Proposition}
\newtheorem{theorem}{Theorem}
\newtheorem{definition}{Definition}
\newtheorem{question}{Question}

\def\ttd{{\tt d}}
\def\ttD{{\tt D}}
\def\ttC{{\tt C}}
\def\ttb{{\tt b}}
\def\dg#1{\mathop{\to}\limits^{#1}}
\def\df#1{\mathop{\Rightarrow}\limits^{#1}}
\def\ndf#1{\mathop{\Rightarrow}\limits^{#1}\varnothing}
\def\nndf#1{\mathop{\not\Rightarrow}\limits^{#1}}
\def\dgn#1{\mathop{\to}\limits^{d_{#1}}}
\def\gstop{\diamond}

\section{Introduction}
Among D.~Matula's pioneering papers that laid the foundation of
modern floating-point arithmetic is \cite{Matulainout}, that investigates the
subject of radix conversions. A number of different authors produced a
variety of results dealing with the efficiency and precision issues of
conversions of floating-point numbers between different radices. The
two papers that formed the basis for subsequent work in this area are
\cite{Steele} and \cite{Clinger}. 

While the subject of principal concern for most authors working in
the field of computer arithmetic is the efficiency (both time and
space) of the algorithms performing such conversions, the question of
minimal `resources' needed for such computations was raised already by
D.~Matula in \cite{Matulainout} and further investigated by W.~Clinger
in \cite{Clinger}.

W.~Clinger's results in \cite{Clinger} have been used to justify the
use of infinite precision 
arithmetic by all known algorithms dealing with radix
conversions. One of his lemmas states that there
is no finite automaton that consumes a string of digits representing
an exponent and outputs the first digit of the best approximation of
the corresponding power of some {\ttD} in a radix that is not
commensurable with {\ttD}. He provides a separate proof for each direction of the input 
(i.e.~least or most significant digit first) and then points
out that his proof of the former is
somewhat incomplete in the sense that it does not work for all possible radix
combinations, although it does succeed for the most common case of base 10
being used for the exponent encoding and 2 for the new radix. Somewhat less important,
the proof only deals with case of the precision
of the converted result being $\geq 4$ and not $\geq 2$ as would seem
intuitively sufficient.

W.~Clinger's results can be restated by borrowing the language of automatic
sequences (see \cite{Allouche} for a reference) as follows.
Given positive integers {\tt D} and {\tt d} let $S_N(\ttd, \ttD)$ be
defined so that for some $k$, $S_N(\ttd, \ttD)\times \ttd^k$ is the best
1-digit (pick rounding to even to settle ambiguous cases)
approximation of $\ttD^N$ if ${\tt d}>2$;
otherwise, it represents the best 2-digit approximation.
Then Lemma~8 of \cite{Clinger} implies that 
$\{\,S_N(\ttd, \ttD)\mid N\in{\mathbb N}\,\}$ is not $\ttb$-automatic for any integer
$\ttb$. Using the robustness of $\ttb$-automaticity (see
\cite{Allouche}, Theorem~5.2.3), one can see that the case of the
least-significant digit first exponent input now follows from the general
properties of $\ttb$-automaticity, and does not require a separate
number theoretic argument such as Lemma~10 of \cite{Clinger}. 

The simple argument above shows that W.~Clinger's results indeed imply
that a fixed amount of memory is not enough to implement basic radix
conversions. From the computability perspective
though, it is still interesting to investigate how complex radix
conversion algorithms must be. Here, we use {\it complex\/} in a naive
sense, as a measure of the sophistication of the computational
`machinery' involved in implementing an algorithm. To be somewhat more
precise, having seen that a finite automaton cannot perform the
computations we require, one can ask whether a push-down automaton is
enough. An automaton with two stacks? Two counters? Questions like
these have been posed before. As an example, see~\cite{holkutr}, whose authors
investigate the register complexity of programs composed of various
looping constructs.
 
A push-down automaton (PDA for short, see \cite{hum}) can produce output
using a function that decodes its final state
in a manner similar to a deterministic finite automaton with output (DFAO,
see \cite{Allouche}).

In this paper we aim at establishing a `computability boundary' for
the task of radix conversions in the sense just outlined. We show that
the addition of a stack is not enough to carry out the required
computations.

In the second part of the paper we turn our attention to automata with
two {\it counters\/} (i.e.~two stacks whose stack alphabets consist of
a single symbol). We define a {\it two counter Minsky machine with
  input\/} (TCMI) by analogy with DFAO and provide a simple proof that such a
machine can compute any radix conversions. The subject of two counter
machines has a long history dating back to the original paper by
M.~Minsky~\cite{Minsky}. A curious phenomenon was noted early on (see
\cite{Schroep}, \cite{Bardz}, and \cite{ibarra}) that the full power of
a two counter machine can only be `tapped' via an exponential encoding
of its input and output. Without such an encoding, even the simplest
functions like $n^2$ are not computable by a TCM (see \cite{Schroep},
\cite{Bardz}, \cite{ibarra} for this and many other results of the same
flavor).

The addition of an input brings about a new level of complexity
because an encoding is supplied automatically. As an
example, to the best of the authors' knowledge, it is still unknown
whether a TCM can compute $n$ when given $2^n$ in one of the counters
(see \cite{Schroep}). On the other hand, a TCMI can simply count the
zeros in its input to output $n$ when the input is the digits of
$2^n$ in radix 2 (the subject of using a different radix is a separate
problem). 

We show (see~Proposition~\ref{mosttoleast}) that whenever the digits of $n$
(in an arbitrary positive integer radix $\ttb$) are input starting
with the most significant digit, a TCMI can compute the value of $n$
{\it online\/}, i.e.~the value of $n$ is available in one of the
counters as soon as the input is stopped (i.e.~no {\tt stop} marker is
necessary). Somewhat surprisingly, it can be shown that no TCMI is
capable of such a feat if the digits of $n$ are input starting with the
{\it least\/} significant one (see Theorem~\ref{tcexact}). It is
unknown to the authors whether a TCMI can compute the value of $n$ if
a {\tt stop} marker is a part of the input alphabet along with the
radix $\ttb$ digits. We also present some evidence that a TCMI may in
some sense be
computationally {\it weaker\/} than TCM when the input to the TCMI is 
presented least significant digit first. Namely we show (see
Theorem~\ref{mmach} and Theorem~\ref{tcmio}) that for an
unbounded function $f$ computable by a TCM (in the sense that such a
TCM halts with the value of $f(n)$ after having been `loaded' $n$ in
one of its counters) a TCMI that outputs $f(n)$ upon being
presented the digits of $n$ in least to most significant order exists
if and only if a
TCMI exists that can compute $n$ using the same input (in which case a
TCMI can obviously compute $f(n)$). This
asymmetry with respect to the order in which the input is presented is
rather unexpected in light of the result about the robustness of
automatic sequences mentioned above and the intuitive perception that
a DFAO has a very limited `memory' compared to a TCMI.  

Before proceeding with the formal definitions and statements of the
main results of this work, it is instructive to take another
look at the arguments in \cite{Clinger}. The core of W.~Clinger's proof
is formed by his Lemma~9 (see \cite{Clinger} or Lemma~\ref{ecli} below
for a slightly weaker statement) 
and {\it Kronecker's lemma\/} each highlighting different aspects of
the dynamic behavior of irrational numbers such as $\log_\ttd\ttD$ for
an {\it incommensurable\/} pair $(\ttd, \ttD)$ (the irrationality of
$\log_\ttd\ttD$ can be taken as the definition of
incommensurability of such a pair). 

Such dynamic behavior manifests
itself in many areas of science and mathematics. For some interesting
connections to other areas of mathematics see, for example,
\cite{Arnold}, Ch.~3, Exercise~4 that illustrates a curious relationship
between Poincar\'e's recurrence theorem (similar in spirit to
Kronecker's lemma) and the digits of powers
of~2. Also closely related to this subject is {\it Benford's law\/} of digit 
distribution (see \cite{Benford} and
\cite{Newcomb}, as well as \cite{Hill}; \cite{Rauchetal} provides a
curious application of Benford's law to economic forensics) which
emphasizes a statistical facet of $\log$ type dynamics.

On the other hand, a simple but clever Lemma~9 of \cite{Clinger}
quickly leads to deep number theoretic questions such as the normality
and automaticity of $\log 2$ and similar numbers, automaticity of the digit sequence
of $\sqrt 2$, etc.\ (see \cite{Allouche}) if one wishes to obtain a
stronger inequality.
 
\section{Basic definitions and notation}\label{autom}
To provide some motivation for the results in this section let us begin
by restating the problem of {\it 
  computing\/} a conversion as a problem of {\it recognizing\/} the
digits of the result of the conversions. Lemma~10 of \cite{Clinger},
declared `redundant' above takes on unexpected significance as it
seems that the part of the proof in \cite{Clinger} based on
Kronecker's lemma (see \cite{Hardy}) does not lend itself to a similar
generalization. 

Below we use the notation $a\{p\}$, where $p\in \mathbb N$, and $a$ is a
letter in some alphabet $\Gamma$, to mean a string in $\Gamma^*$ that
is a concatenation of $p$ copies of $a$.

Following the established tradition, we also use $\{\theta\}$ to
denote the fractional part of $\theta$
(i.e.~$\{\theta\}=\theta\bmod1$). This should not cause any confusion
with the use of $a\{k\}$ as a regular expression for a string of
$a$'s. Note that $\{x+y\}$ is continuous at
every $(x,y)$ such that $(x+y)\bmod 1\not=0$.

Given the `input radix' {\tt D}, the `output radix' {\tt d}, and the
`exponent radix' $\ttb$, consider the following language $P=\{\,
{\tt 10}\{p\}\mid p\in {\mathbb N}\,\}$ where {\tt 1} and {\tt 0} are
$\ttb$-digits.
Let $n \ge 1$ be fixed.
This language can be partitioned into $P_{m,n}$,
${\tt d}^{n-1} \le m <{\tt d}^n$, where 
$$
P_{m,n}=\{\,{\tt 10}\{p\}\mid m\times {\tt d}^k\hbox{ is the
best $n$-digit approximation of ${\tt D}^{\ttb^p}$}\,\}
$$
We again assume that $n \ge 2$ if $\ttd = 2$.
It is easy to see that if
one of $P_{m,n}$ is not a regular language,
then the sequence $S_N({\tt D},{\tt d})$
above is not $\ttb$-automatic thus proving that radix
conversions cannot be computed using finite automata. For some
combinations of $\ttb$ and {\tt d} we have the following stronger statement:
\begin{lemma}\label{cling}
Suppose that ${\tt d}$ and ${\tt D}$ are incommensurable.
Provided $\ttb^2/(\ttb-1)<2{\tt
    d}^{n-1}\log{\tt d}$, some of $P_{m,n}$ are not context-free.
\end{lemma}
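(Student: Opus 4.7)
The plan is to extend the pumping-lemma argument of Lemma~10 of \cite{Clinger} to the $n$-digit partition $\{P_{m,n}\}$. Assume for contradiction that every $P_{m,n}$ with $\ttd^{n-1}\le m<\ttd^n$ is context-free. The CFL pumping lemma, applied to each member of this finite family and followed by taking the least common multiple of the resulting periods, yields an integer $Q$ such that for every sufficiently large $p$ in a suitable arithmetic progression, the implication ${\tt 10}\{p\}\in P_{m,n}\Rightarrow {\tt 10}\{p+Q\}\in P_{m,n}$ holds; iterating, the same $m$ persists through all ${\tt 10}\{p+iQ\}$ with $i\ge 0$.

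Next I would reformulate membership in $P_{m,n}$ as a condition on the fractional part $\{\ttb^p\log_\ttd\ttD\}$. The best $n$-digit significand of $\ttD^{\ttb^p}$ equals $m$ precisely when this fractional part lies in an interval $I_m\subset[0,1)$ whose width is at most $\log_\ttd\bigl((m+1/2)/(m-1/2)\bigr)$; for $m\ge\ttd^{n-1}$ this width is bounded by a quantity strictly less than $(\ttb-1)/\ttb^2$ by virtue of the hypothesis $\ttb^2/(\ttb-1)<2\ttd^{n-1}\log\ttd$. Thus whenever both ${\tt 10}\{p\}$ and ${\tt 10}\{p+Q\}$ belong to $P_{m,n}$, the point $\{\ttb^p(\ttb^Q-1)\log_\ttd\ttD\}$ must lie within distance $(\ttb-1)/\ttb^2$ of $0$ modulo~$1$.

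To close the contradiction I would invoke Lemma~9 of \cite{Clinger} (stated here as Lemma~\ref{ecli}) with $M=\ttb^p(\ttb^Q-1)$ and the irrational $\log_\ttd\ttD$ (irrationality coming from incommensurability of the pair $(\ttd,\ttD)$) to produce arbitrarily large $k$ with
$$
\frac{\ttb-1}{\ttb^2}<\bigl\{\ttb^k M\log_\ttd\ttD\bigr\}<\frac{\ttb^2-\ttb+1}{\ttb^2}.
$$
Once $p+k$ can be placed in the pumping arithmetic progression, this contradicts the previous paragraph. The main obstacle is exactly this last alignment step: Lemma~9 a priori furnishes unrestricted large $k$, while the pumping lemma restricts attention to a fixed residue class modulo~$Q$. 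This can be resolved either by applying the lemma to $\ttb^r\log_\ttd\ttD$ for each $r\in\{0,\dots,Q-1\}$ and pigeonholing, or by verifying that the Weyl-type equidistribution argument underlying Lemma~9 passes to any fixed residue class. A minor secondary point, again guaranteed by incommensurability, is that $\{\ttb^p\log_\ttd\ttD\}$ never coincides with a boundary $\log_\ttd(m\pm 1/2)\bmod 1$, so the partition by the $I_m$ is unambiguous.
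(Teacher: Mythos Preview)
Your plan is the paper's: assume every $P_{m,n}$ is context-free, extract a common period $Q$ by pumping, translate membership in $P_{m,n}$ into a constraint on the fractional part $\{\ttb^p\log_\ttd\ttD\}$, and then use Clinger's Lemma~9 to produce a $k$ for which $\{\ttb^{p+k}(\ttb^Q-1)\log_\ttd\ttD\}$ sits too far from $0$ modulo~$1$ for ${\tt 10}\{p+k\}$ and ${\tt 10}\{p+k+Q\}$ to share a class $P_{m,n}$. The interval-width estimate you sketch is exactly the content of the ``same argument as Lemma~10 of \cite{Clinger}'' that the paper invokes.

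The one substantive divergence is your ``alignment obstacle'', and here the paper's treatment is simpler than either of your proposed fixes. The point you are missing is that the pumping period is \emph{bounded}: for each $m$ the pumping constant $N_m$ has the property that any sufficiently long ${\tt 10}\{p\}\in P_{m,n}$ pumps with $vx\in\{{\tt 0}\}^*$ (anything else leaves $P\supseteq P_{m,n}$) and $1\le|vx|\le N_m$. Hence taking $Q$ divisible by $\mathrm{lcm}(1,\dots,\max_m N_m)$ yields the implication ${\tt 10}\{p\}\in P_{m,n}\Rightarrow{\tt 10}\{p+Q\}\in P_{m,n}$ for \emph{all} sufficiently large $p$, not merely those in a fixed residue class. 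The paper isolates precisely this observation as Lemma~\ref{cycle} and invokes it at the outset. With such a $Q$ in hand, any arbitrarily large $k$ furnished by Lemma~9 works directly; no pigeonholing over residues and no separate equidistribution argument are needed.
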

\begin{proof}Suppose $P_{m,n}$ is context free for each
${\tt d}^{n-1} \le m <{\tt d}^n$.
Then Lemma~\ref{cycle} implies that for
each such $m$ there
exist $p,q\in \mathbb N$ such that for any $k\in\mathbb N$ ${\tt
  10}\{p+kq\}\in P_{m,n}$. Putting $Q$ to be the product of all $q$'s
  we conclude that for a large enough $p$, if ${\tt 10}\{p\}\in P_{m,n}$
  then so is ${\tt 10}\{p+Q\}$. Using Lemma~9 of \cite{Clinger} there
  is an arbitrarily large $k\in\mathbb N$ such that
$$
{\frac{\ttb-1}{\ttb^2}}<
\left\{\ttb^k(\ttb^{p+Q}-\ttb^p)\log_{\tt d}{\tt D}\right\}
<{\frac{\ttb^2-\ttb+1}{\ttb^2}}
$$
The same argument as that of Lemma~10 of \cite{Clinger} shows that
this contradicts ${\tt 10}\{p+k\}$ and ${\tt 10}\{p+k+Q\}$ both being
in $P_{m,n}$.
\end{proof}

A slightly surprising feature of the proof above is its dependence on
a particular relationship between the different 
radices. Intuitively, no such dependence should exist. The
trivial nature of the languages $P_{m,n}$ also suggests that the
inequality of Lemma~9 of \cite{Clinger} could be improved if more had been
known about the distribution of $\ttb$-digits of $\log_\ttd\ttD$ for
different radices $\ttb$. It seems, however, that even the most basic
questions of this kind (such as, how often, if at all, a certain
digit appears in the decimal expansion of $\log_\ttd\ttD$) are rather
hard (see~\cite{Allouche} or \cite{Hardy} for some examples).

We use standard definitions for most concepts appearing in this paper
as well as their natural extensions. If
$z$ is a string of $\ttb$-digits for some radix $\ttb$, we write
$\ulcorner z\urcorner$ to indicate the value of the corresponding
number in radix $\ttb$ where we assume that the least significant
digit of $z$ is the leftmost one. If the least significant digit of $z$
is the rightmost one we let $\llcorner z\lrcorner$ stand for the value
of $z$ in radix $\ttb$.

In the proofs below we only use the natural correspondence between
PDAs and CFLs (see \cite{hum}) and thus do not need the definition of
a PDA. Since  a {\it deterministic\/} PDA provides a good introduction
to two counter machines treated later let us define this narrower
concept.

A {\it deterministic push-down automaton\/} (see \cite{hum}, we assume
for simplicity that the stack is changed one symbol at a time) $M$ is defined as a 7-tuple $(Q, \Sigma,
\Gamma, q_0, Z_0, F, \delta)$ where $Q$ is a finite set of
{\it states}, $\Sigma$ is the {\it input alphabet}, $\Gamma$ is the {\it stack alphabet},
$q_o\in Q$ is the {\it initial state}, $Z_0$ is the {\it start
  symbol\/}, $F\subseteq Q$ is a set of {\it final states\/}, and the
partial function
$\delta:Q\times(\Sigma\cup\{\epsilon\})\times\Gamma\to Q\times\Gamma$
is a collection of {\it moves\/}. Some additional restrictions are
placed on $\delta$: each move $\delta(q,a,x)=(p,\cdot)$ is either a {\it pop
  move\/} (where $\cdot=\epsilon$), i.e.~``if the input symbol is $a$, the top stack symbol is
$x$ and the current state is $q$, remove $x$ from the stack and go to
state $p$'', a {\it
  push move\/} with a similar natural meaning, namely, ``under
the circumstances as above, push $\cdot$ on the stack'', or a {\it no change
  move\/}. If $a=\epsilon$, the interpretation of
$\delta(q,\epsilon,x)$ is ``ignore the input for the moment, do
something to the stack and go to
the next state $\ldots$''. We require that whenever $\delta(q,\epsilon,x)$ is
defined, no other $\delta(q,\cdot,x)$ is defined (i.e.~the DPDA is
never asked to choose whether or not to consume the input). Such moves
are called {\it $\epsilon$-moves\/}
and can be thought of as the post- or preprocessing performed by the
DPDA. Naturally, when the stack is empty, no pop moves are
possible. Empty stack can be recognized when the special symbol $Z_0$
is on top of the stack. $Z_0$ cannot be popped or pushed.

The input alphabet is $\Sigma=\{\,0,1,\ldots,\ttb-1,\gstop\,\}$
everywhere below, where $\ttb$ is some fixed radix.
The existence of $\epsilon$-moves is the reason the input alphabet
includes a {\it stop
  marker\/}, $\gstop$, to give $M$ one more chance at processing the
stack it has accumulated. We will assume that $\gstop$ means the input
is finished and that it appears only once at the end of the input. This
is not part of a standard DPDA definition but is assumed everywhere
below. If the sequence of $\epsilon$-moves following
the appearance of $\gstop$ in the input does not affect the stack, we
say that $M$ processes its input {\it online} (this concept becomes
much more important for two counter machines with input defined later).


One can view $M$ as a machine (which is, indeed, the terminology often
used in this context) that starts in $q_0$ with only $Z_0$ on the
stack then reads and processes its input one
symbol at a time, until it sees $\gstop$ (in the general case this is
unnecessarily restrictive but we will always follow this convention) upon
which it enters the final phase of processing consisting of some
$\epsilon$-moves until it ends up in one of the states in $F$. The
particular state of $M$ at the end of the computation is $M$'s {\it
output\/} and can be thought of as a finite encoding of the result $M$
is built to produce.

\section{Radix conversions and PDAs}

To show that PDAs cannot compute radix conversions, we again restate
the computation problem as a recognition problem for the languages
defined below.
\begin{definition}Let $\ttb$, $\ttd$, and $\ttD$ be fixed radices. Let
  $n=1$ if $\ttd>2$ and $n=2$ if $\ttd=2$. Define $L_d$
  to consist of all sequences of $\ttb$-digits $z$ such that the best $n$-digit
  approximation in radix $\ttd$ of $\ttD^{\ulcorner z\urcorner}$ is
  $d$ if $\ttd>2$ or ${\tt 1}d$ if $\ttd=2$. $M_d$ can be defined similarly with $\ttD^{\llcorner
    z\lrcorner}$ instead.
\end{definition}
Just as before, it is immediate that if one of the $L_d$'s or $M_d$'s
is not context-free there is no PDA that computes the best $n$-digit
floating point approximation of $\ttD^e$ where $e$ is presented in
radix $\ttb$ in the appropriate order.


To show that one of the $L_d$'s as well as one of the $M_d$'s is not
context-free we modify the standard pumping lemma (see e.g., Theorem~7.18 of~\cite{hum})
to pump without disrupting the prefix and the suffix. The result of
the lemma is also a corollary of a very powerful {\it Multiple Pumping
  Lemma\/} (see \cite{Kracht}, Theorem~1.82). The proof of the
Multiple Pumping Lemma has not been published, however, so we present
the following direct proof instead.

\begin{lemma}\label{cycle}
Suppose that $L$ is a CFL.
Let $a\in\Sigma$ and $u,w\in\Sigma^*$ be fixed.
Then there exists a number $N$ such that
if $ua\{n_1\}w\in L$ with $n_1 \ge N$ then
we can find $1 \le \Delta \le N$ satisfying
$ua\{n_1+i\Delta\}w\in L$ for $i \ge -1$.
\end{lemma}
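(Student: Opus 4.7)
The plan is to reduce the statement to an elementary pumping fact about regular languages over the unary alphabet $\{a\}$. Given the CFL $L$ and fixed $u,w\in\Sigma^*$, I introduce the ``slice''
\[
L' \;=\; \{\,a^n \mid u\,a^n\,w \in L\,\}\;\subseteq\;\{a\}^*.
\]
The first step is to observe that $L'$ is itself a CFL: it can be written as $\{a\}^*\cap\bigl(\{u\}^{-1}L\{w\}^{-1}\bigr)$, and the class of CFLs is closed under intersection with a regular language and under left and right quotient by a regular language (in particular by the singletons $\{u\}$ and $\{w\}$). Because $L'$ lives over a one-letter alphabet, the classical corollary of Parikh's theorem that every unary CFL is regular then forces $L'$ to be regular; I fix a DFA $M$ accepting $L'$ with, say, $N$ states, and take this $N$ as the constant promised by the lemma.

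For the main step, suppose $n_1\ge N$ with $u\,a^{n_1}\,w\in L$, and let $q_0,q_1,\ldots,q_{n_1}$ be the run of $M$ on $a^{n_1}$, so that $q_{n_1}$ is accepting. The pigeonhole principle applied to the $N+1$ states $q_0,\ldots,q_N$ produces indices $0\le i<j\le N$ with $q_i=q_j$; setting $\Delta=j-i$ one has $1\le\Delta\le N$. Determinism of $M$ together with $q_i=q_j$ gives $q_{t+\Delta}=q_t$ for every $t\ge i$, and hence $q_{n_1+k\Delta}=q_{n_1}$ for every integer $k\ge-1$ (the case $k=-1$ requires only $n_1\ge j$, which holds because $j\le N\le n_1$). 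Translating back through the definition of $L'$ yields $u\,a^{n_1+k\Delta}\,w\in L$ for every $k\ge-1$, exactly the conclusion of the lemma.

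The only conceptually non-trivial ingredient above is the closure property that $\{u\}^{-1}L\{w\}^{-1}$ is context-free; this is standard but does rest on some grammar (or PDA) manipulation, and together with Parikh's theorem it is what replaces the unpublished Multiple Pumping Lemma alluded to in the introduction. Once those two inputs are granted, the remainder of the argument is routine: the finite-state cycle argument dispenses with everything else, and both the uniform bound $\Delta\le N$ and the endpoint $k=-1$ drop straight out of the pigeonhole estimate. I expect this closure/Parikh reduction to be the only place where a reader might pause; everything after it is mechanical.
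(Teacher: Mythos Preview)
Your proof is correct and takes a genuinely different route from the paper's. The paper works directly with a Chomsky normal form grammar for $L$: it sets $N=2^{(|uw|+1)m}$ where $m$ is the number of variables, walks down the longest root-to-leaf path of a parse tree of $ua\{n_1\}w$, isolates a stretch of at least $m$ consecutive variables whose sibling subtrees all yield substrings of the block $a\{n_1\}$, and pumps on a repeated variable inside that stretch; a small case analysis then shows the pumped pieces consist only of $a$'s sitting between $u$ and $w$. Your argument replaces all of this tree combinatorics with two standard closure facts (quotient by a singleton, intersection with $\{a\}^*$) together with Parikh's theorem, reducing the question to the pigeonhole lemma for a unary DFA. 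What you gain is brevity and a $\Delta$ that is visibly independent of $n_1$ (since the cycle in $q_0,\ldots,q_N$ does not depend on the input length); what the paper's hands-on argument gains is self-containment---it needs nothing beyond the standard pumping-lemma machinery and in particular avoids invoking Parikh's theorem, and it yields an explicit bound for $N$ in terms of the grammar. Both approaches are adequate for how the lemma is used later (only the existence of such $N$ and $\Delta$ matters in Lemma~\ref{cling} and Theorem~\ref{pdacs}).
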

\begin{proof}
If $|uw| = 0$
then the claim is obvious from the standard pumping lemma.
Thus, we assume $|uw| \ge 1$.
The proof follows that of Theorem~7.18
of~\cite{hum} almost exactly, 
using a Chomsky normal form (CNF) grammar which expresses $L$.

Let the CNF grammar have $m$ variables,
and set $m' = (|uw|+1)m$ and $N = 2^{m'}$.
Suppose that $ua\{n_1\}w\in L$ with $n_1 \ge N$.
Then the longest path of a parsing binary tree for $ua\{n_1\}w$
has $(m''+2)$ edges with $m'' \ge m'$,
containing $m''$ productions
of the form $A_{i} \to A_{i+1}B_{i+1}$
or $A_{i} \to B_{i+1}A_{i+1}$ for $i=1,\ldots,m''$,
starting from the root $A_1$.
We can find 
a list of consecutive variables
$A_{i_1},A_{i_1+1},\ldots,A_{i_2}$ of size at least $m$
such that the subtree $B_{i+1}$ generates a substring of $a\{n_1\}$
for every $i_1 \le i \le i_2$,
and consequently we can find $A_{j_1} = A_{j_2}$
such that $i_2-m+1 \le j_1 < j_2 \le i_2+1$.

We write the yields of the subtrees $A_{j_1}$
and $A_{j_2}$ as $xvy$ and $v$ respectively.
As in the proof of the standard pumping lemma we have
$|xvy| \le N$ and $u'x\{i\}vy\{i\}w'\in L$ for any $i \ge 0$,
where $ua\{n_1\}w = u'xvyw'$.

Furthermore, one of the following cases must hold for $v$:
({\it i\/}) $v = a\{k\}$ with $k \ge 1$;
({\it ii\/}) $v = u''a\{k'\}$ with suffix $u''$ of $u$ and $k' \ge 0$;
or
({\it iii\/}) $v = a\{k'\}w''$ with prefix $w''$ of $w$ and $k' \ge 0$.
Now we can find an appropriate $\Delta \ge 1$
as required. Namely,
if ({\it i\/}) holds then we put $xy = a\{\Delta\}$;
if ({\it ii\/}) holds then $x = \epsilon$ so put $y = a\{\Delta\}$;
if ({\it iii\/}) holds put $x = a\{\Delta\}$ since $y = \epsilon$.
\end{proof}

Next we proceed with the number theoretic results that take advantage
of the combinatorial lemma above. The first statement is the
famous Kronecker's lemma in a slightly weaker form than the original
(see~\cite{Hardy} for the full version and the proof).

\begin{lemma}[Kronecker's lemma, see~\cite{Hardy}] The set
  $\{\,\{n\theta\}\mid n\in{\mathbb N}\,\}$ is dense in $(0,1)$ for
  every irrational $\theta>0$.
\end{lemma}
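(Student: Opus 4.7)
The plan is to prove density via a pigeonhole argument that exploits the irrationality of $\theta$ only at the very end. Fix a target $x \in (0,1)$ and a tolerance $\varepsilon > 0$; the goal is to produce $n \in \mathbb{N}$ with $|\{n\theta\} - x| < \varepsilon$. Choose $N \in \mathbb{N}$ with $1/N < \varepsilon$ and partition $[0,1)$ into the $N$ half-open sub-intervals of length $1/N$. Among the $N+1$ values $\{0\theta\}, \{\theta\}, \{2\theta\}, \ldots, \{N\theta\}$, two must lie in the same sub-interval by pigeonhole, so for some $0 \le i < j \le N$ we have $|\{j\theta\} - \{i\theta\}| < 1/N$. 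Setting $k = j-i$ forces either $0 \le \{k\theta\} < 1/N$ or $1 - 1/N < \{k\theta\} < 1$, and irrationality of $\theta$ excludes $\{k\theta\} = 0$, so the step $\alpha = \{k\theta\}$ (or $\alpha - 1$ in the second case) is a nonzero real of absolute value less than $1/N$.

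Next I would examine the orbit $\{mk\theta\}$, $m = 1, 2, 3, \ldots$. Each consecutive difference $\{(m+1)k\theta\} - \{mk\theta\}$ equals either $\alpha$ or $\alpha - 1$, and in either case has absolute value less than $1/N$. Consequently the orbit advances around the circle $\mathbb{R}/\mathbb{Z}$ in steps of size less than $1/N$ and cannot skip over any arc of length $1/N$ without first landing inside it. In particular the orbit must enter the interval $(x - 1/N, x + 1/N) \cap (0,1)$, producing the required $n = mk$ with $|\{n\theta\} - x| < \varepsilon$. Letting $\varepsilon \to 0$ gives density.

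The main obstacle is justifying cleanly that an orbit on $\mathbb{R}/\mathbb{Z}$ whose step size is uniformly smaller than $1/N$ must meet every sub-interval of length $1/N$. I would treat this with a short induction: between two consecutive wraps around the circle the orbit traces a strictly monotone sequence in $(0,1)$ with consecutive gaps less than $1/N$, so it meets every length-$(1/N)$ sub-interval it traverses; after wrapping, the argument repeats on the remaining arc. Everything else reduces to bookkeeping, and no property of $\theta$ beyond irrationality is needed.
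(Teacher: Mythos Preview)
The paper does not prove Kronecker's lemma; it merely states it and refers the reader to Hardy~\cite{Hardy} for the full statement and proof. So there is no ``paper's proof'' to compare against.

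Your argument is the classical pigeonhole proof and is correct. One small wording point: the consecutive difference $\{(m{+}1)k\theta\}-\{mk\theta\}$, computed in $\mathbb{R}$, equals either $\alpha$ or $\alpha-1$, and only one of those has absolute value below $1/N$ (the other is close to $1$). What you really want---and what you use in the very next sentence---is that the \emph{circle distance} in $\mathbb{R}/\mathbb{Z}$ between consecutive orbit points is below $1/N$. With that reading, your ``monotone lap'' justification is fine: writing $\alpha=\{k\theta\}$ (or $1-\{k\theta\}$ in the second case), the points $\alpha,2\alpha,\ldots,\lfloor 1/\alpha\rfloor\alpha$ already lie in $(0,1)$ with consecutive gaps $\alpha<1/N$ and last point above $1-\alpha$, so every $x\in(0,1)$ is within $1/N$ of one of them.
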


The following lemma is an easy corollary of \cite{Clinger},
Lemma~9 whose proof is based on the analysis of the fractional part of
$\theta$ presented in radix $C$.

\begin{lemma}[\cite{Clinger}]\label{ecli}Let $\theta>0$ be irrational, $C>1$ be a
  natural number. Then there exist
  infinitely many $m>0$ such that for $\alpha=\{C^m\theta\}$,
  $\beta=\{C^{m+1}\theta\}$:
$$C^{-2}<|\alpha-\beta|<1-C^{-2}$$
\end{lemma}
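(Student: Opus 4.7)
The plan is to exploit the radix-$C$ expansion of $\theta$. Writing $\alpha=\{C^m\theta\}=0.e_1e_2e_3\ldots$ in base $C$ (the digits depend on $m$), one immediately has $\beta=\{C^{m+1}\theta\}=\{C\alpha\}=0.e_2e_3\ldots$, and since $\alpha=e_1/C+\beta/C$, elementary algebra yields the identity
$$\alpha-\beta=\frac{e_1-(C-1)\beta}{C}.$$
The upper bound $|\alpha-\beta|<1-C^{-2}$ is then automatic for every $m$: with $e_1\in\{0,\ldots,C-1\}$ and $(C-1)\beta\in[0,C-1)$, the identity forces $|\alpha-\beta|<(C-1)/C$, and $(C-1)/C<1-C^{-2}$ for any $C>1$.

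For the lower bound I would argue by contradiction: suppose $|\alpha-\beta|\le C^{-2}$ for every sufficiently large $m$. The identity then pins $\beta$ to the interval of length $2/(C(C-1))$ centered at $e_1/(C-1)$. A brief case split on $e_1$ (extreme values $0$ and $C-1$ versus the middle values $1,\ldots,C-2$) will show that this interval is contained in $[e_1/C,(e_1+1)/C]$, so the first digit $e_2$ of $\beta$ must equal $e_1$; the only value of $\beta$ in that interval where $e_2\ne e_1$ could occur is a rational endpoint, which would render $\theta$ rational. Iterating this shift, the base-$C$ expansion of $\theta$ would be eventually constant, forcing $\theta$ to be rational, contrary to hypothesis. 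Hence infinitely many $m$ satisfy $|\alpha-\beta|>C^{-2}$.

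The main technical step is verifying the containment $[e_1/(C-1)-1/(C(C-1)),\,e_1/(C-1)+1/(C(C-1))]\cap[0,1)\subseteq[e_1/C,(e_1+1)/C]$ uniformly in $e_1$. For $1\le e_1\le C-2$ this reduces to the two algebraic inequalities $(e_1-1)/(C(C-1))\ge0$ and $(e_1-C+2)/(C(C-1))\le0$; the extreme cases $e_1=0$ and $e_1=C-1$ follow by direct computation, using the truncation by $[0,1)$ and the observation that the sole excluded boundary point corresponds to a rational value of $\beta$, which is incompatible with $\theta$ being irrational.
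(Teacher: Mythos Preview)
Your proof is correct and follows essentially the same route the paper points to: it cites the result as an easy corollary of Clinger's Lemma~9, noting that the argument ``is based on the analysis of the fractional part of $\theta$ presented in radix~$C$,'' which is precisely your digit-shift identity $\alpha-\beta=(e_1-(C-1)\beta)/C$ together with the eventual-constancy contradiction. The interval-containment computation you outline is exactly the verification needed, and your handling of the boundary cases $e_1=0$ and $e_1=C-1$ (using irrationality of $\beta$ to exclude rational endpoints) is sound.
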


Unfortunately, the inequality above is too weak for our goals and has
to be amended. At present we do not have the number theoretic tools to
produce a `clean' proof of a better inequality and have to take an
indirect approach, instead, by modifying a few digits of the number. The first
modification puts the iterate of $\theta$ in one of the two ranges. It
seems that it should be possible to ensure that it ends up in a
specific range, however it is unclear how to do that at the moment.
 
\begin{lemma}\label{qrational}
Let $\theta>0$ be irrational, $C>1$ be an integer. Then there exists
an integer $K>0$ and an infinite sequence $m_1,\ldots,m_i,\ldots$ of
integers such that either 
$$
1/3\leq\lim_{i\to\infty}|\{KC^{m_i}\theta\}-\{KC^{m_i+1}\theta\}|\leq1/2
$$ or
$$
1/2\leq\lim_{i\to\infty}|\{KC^{m_i}\theta\}-\{KC^{m_i+1}\theta\}|\leq2/3
$$
\end{lemma}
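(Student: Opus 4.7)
The plan is to combine Lemma~\ref{ecli} with a compactness argument and a carefully chosen multiplier $K$. Lemma~\ref{ecli} supplies infinitely many $m$ for which $\alpha_m:=\{C^m\theta\}$ and $\alpha_{m+1}:=\{C^{m+1}\theta\}$ satisfy $C^{-2}<|\alpha_m-\alpha_{m+1}|<1-C^{-2}$. By compactness of $[0,1]^2$, I first pass to a subsequence $m_i$ along which $(\alpha_{m_i},\alpha_{m_i+1})\to(\alpha^*,\beta^*)$, so that $\gamma:=\alpha^*-\beta^*$ satisfies $|\gamma|\in[C^{-2},1-C^{-2}]$; in particular $\gamma\ne 0$.

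The role of $K$ is to rescale the limiting gap into the target range. Using the identity $\{KC^m\theta\}=\{K\alpha_m\}$ together with $\{Kx\}-\{Ky\}\equiv K(x-y)\pmod 1$, one checks that $|\{K\alpha^*\}-\{K\beta^*\}|$ equals either $\{K\gamma\}$ or $1-\{K\gamma\}$, both of which lie in $[1/3,2/3]$ whenever $\{K\gamma\}\in[1/3,2/3]$. Thus the problem reduces to producing a positive integer $K$ with $\{K\gamma\}\in[1/3,2/3]$.

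Two cases cover this reduction. If $\gamma$ is irrational, Kronecker's lemma applied to $|\gamma|$ yields infinitely many $K$ with $\{K\gamma\}$ arbitrarily close to $1/2$. If instead $\gamma=p/q$ is rational in lowest terms, then $q\ge 2$ (since $0<|\gamma|<1$), and as $K$ varies, $\{K\gamma\}$ ranges over $\{0,1/q,\ldots,(q-1)/q\}$; the interval $[q/3,2q/3]$ has length $q/3$, which for $q\ge 2$ always contains an integer $k$ with $k/q\in[1/3,2/3]$ (the cases $q=2,3$ are checked directly).

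The main technical obstacle will be the discontinuity of the fractional-part function at integer arguments: if $K\alpha^*$ or $K\beta^*$ happens to be an integer, then $\{K\alpha_{m_i}\}$ may fail to converge to $\{K\alpha^*\}$. I expect to handle this by thinning the subsequence so that $\alpha_{m_i}\to\alpha^*$ and $\alpha_{m_i+1}\to\beta^*$ monotonically from one side, ensuring convergence of $\{K\alpha_{m_i}\}$ and $\{K\alpha_{m_i+1}\}$ even at a jump point (to $0$ or $1$), without disturbing the $\{K\gamma\}$ identity modulo $1$. Alternatively, one can simply select $K$ outside the finitely many arithmetic progressions on which $K\alpha^*$ or $K\beta^*$ is integer; because the set of $K$ satisfying $\{K\gamma\}\in[1/3,2/3]$ is infinite, such a choice is always possible. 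Once the limit of $|\{KC^{m_i}\theta\}-\{KC^{m_i+1}\theta\}|$ is known to lie in $[1/3,2/3]$, splitting at $1/2$ and passing to one more subsequence yields one of the two stated cases.
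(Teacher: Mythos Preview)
Your proposal follows essentially the same strategy as the paper: apply Lemma~\ref{ecli}, pass to a convergent subsequence by compactness, and then pick $K$ via Kronecker's lemma in the irrational case and a direct residue argument in the rational case. The paper's execution differs in one tidy respect: instead of taking the limit of the pair $(\alpha_{m_i},\alpha_{m_i+1})$ in $[0,1]^2$, it takes the limit $L$ of the \emph{scalar} $|\{C^{m_i}\theta\}-\{C^{m_i+1}\theta\}|$ and uses, for every $i$, the identity that $|\{KC^{m_i}\theta\}-\{KC^{m_i+1}\theta\}|$ is either $\{K\,|\{C^{m_i}\theta\}-\{C^{m_i+1}\theta\}|\}$ or one minus that quantity. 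Because $K$ is chosen so that $\{KL\}\in[1/3,1/2]$, the value $KL$ is never an integer, hence $t\mapsto\{Kt\}$ is continuous at $L$, and the discontinuity obstacle you anticipate simply never appears; one last pass to a subsequence then fixes which of the two alternatives holds throughout. This buys a shorter argument with no boundary cases; your version is also correct but must carry those cases explicitly.

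One caution about your second proposed fix: it is not always possible to select $K$ with $K\alpha^*\notin\mathbb{Z}$, since nothing prevents $\alpha^*=0$ (the terms $\{C^{m_i}\theta\}$ lie in $(0,1)$ but their limit may be $0$), in which case \emph{every} $K$ is bad. Your first fix via monotone one-sided convergence does work, so the argument goes through; just drop the alternative.
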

\begin{proof}
Using Lemma~\ref{ecli} find an infinite sequence $m_1,\ldots,m_i,\ldots$ such that
$C^{-2}<|\{C^{m_i}\theta\}-\{C^{m_i+1}\theta\}|<1-C^{-2}$. Using the
sequential compactness of $[C^{-2},1-C^{-2}]$, and picking a
convergent subsequence if necessary, assume that
$L=\lim_{i\to\infty}|\{C^{m_i}\theta\}-\{C^{m_i+1}\theta\}|$ exists
and $C^{-2}\leq L\leq 1-C^{-2}$. If $L$ is irrational, use Kronecker's
lemma to find $K>0$ such that $1/3<\{KL\}<1/2$. Otherwise, $L=p/q$
where $p$ and $q$ are relatively prime and $q\geq2$. Let integers $m$ and
$n$ be chosen so that $np+mq=1$; then $nL=1/q-m$. Thus
$\{|n|L\}=1/q$ or $\{|n|L\}=1-1/q$ depending on the signs of $m$ and
$n$. After multiplying $|n|$ by $q-1$ if necessary, we can assume
$\{|n|L\}=1/q$. Putting $K=|n|\lfloor q/2\rfloor$, we have
$1/3\leq\{KL\}\leq 1/2$. 

Now, for each $i$, either
$|\{KC^{m_i}\theta\}-\{KC^{m_i+1}\theta\}|=\{K|\{C^{m_i}\theta\}-\{C^{m_i+1}\theta\}|\}$
or
$|\{KC^{m_i}\theta\}-\{KC^{m_i+1}\theta\}|=1-\{K|\{C^{m_i}\theta\}-\{C^{m_i+1}\theta\}|\}$. Thus,
after possibly choosing a proper subsequence, either
$\lim_{i\to\infty}|\{KC^{m_i}\theta\}-\{KC^{m_i+1}\theta\}|=\{KL\}$ or
$\lim_{i\to\infty}|\{KC^{m_i}\theta\}-\{KC^{m_i+1}\theta\}|=1-\{KL\}$.
\end{proof}

The second modification establishes the desired inequality.
\begin{lemma}\label{normalize}
Let $\theta>0$ be irrational, $\alpha_i$ and $\beta_i$, $i=1,2,\ldots$
be such that $\alpha_i,\beta_i\in(0,1)$ and either $1/3\leq\lim|\alpha_i-\beta_i|\leq1/2$ or
$1/2\leq\lim|\alpha_i-\beta_i|\leq2/3$. Then there is an increasing
subsequence $i(j)$, $j=1,2,\ldots$ and an integer $q\geq0$ such
that
$1/3\leq\lim|\{\alpha_{i(j)}+q\theta\}-\{\beta_{i(j)}+q\theta\}|\leq1/2$.
\end{lemma}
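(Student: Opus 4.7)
The plan is to split on which of the two hypothesis intervals $L := \lim|\alpha_i - \beta_i|$ falls into. If $L \in [1/3, 1/2]$, the choice $q = 0$ works directly with no further subsequence, so I concentrate on the case $L \in [1/2, 2/3]$. By passing to a subsequence and swapping the roles of $\alpha$ and $\beta$ if necessary, I may assume the \emph{signed} differences $\alpha_i - \beta_i$ tend to $+L$. Refining once more by compactness of $[0,1]$, I may further assume $\beta_i \to \beta^* \in [0, 1-L]$, where the upper bound follows from $\alpha_i = \beta_i + (\alpha_i - \beta_i) < 1$.

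The key observation is that for $x, y \in (0,1)$ with $\delta := x - y \in (0,1)$ and any real $t$ such that $y + t \notin \mathbb{Z}$, one has $\{x+t\} - \{y+t\}$ equal to either $\delta$ (when $\lfloor x + t \rfloor = \lfloor y + t \rfloor$) or $\delta - 1$ (when the floors differ by one), and the latter case occurs precisely when $\{y + t\} \ge 1 - \delta$. If I can arrange for the floors of $\alpha_{i(j)} + q\theta$ and $\beta_{i(j)} + q\theta$ to differ by one for all $j$ in some subsequence, the absolute difference of fractional parts will tend to $1 - L$, and since $L \in [1/2,2/3]$ we have $1 - L \in [1/3, 1/2]$, which is exactly what the lemma asks for.

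To force the floors to differ, I apply Kronecker's lemma to $\theta$: the set $\{\{\beta^* + q\theta\} : q \ge 1\}$ is dense in $(0,1)$. Fix a small $\eta > 0$ and choose $q \ge 1$ with $\{\beta^* + q\theta\} \in (1 - L + \eta, 1 - \eta)$, a nonempty open subinterval of $(1-L, 1)$ because $L \ge 1/2$; the irrationality of $\theta$ lets me simultaneously insist $\beta^* + q\theta \notin \mathbb{Z}$. By continuity of $\{\cdot\}$ at non-integer arguments, $\{\beta_i + q\theta\} \to \{\beta^* + q\theta\}$, so for all sufficiently large $i$ the quantity $\{\beta_i + q\theta\} + (\alpha_i - \beta_i)$ lies strictly between $1$ and $2$. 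Hence the floors of $\alpha_i + q\theta$ and $\beta_i + q\theta$ differ by exactly one, and $|\{\alpha_i + q\theta\} - \{\beta_i + q\theta\}| = 1 - (\alpha_i - \beta_i) \to 1 - L$, as desired. The main obstacle is a purely bookkeeping one---nesting subsequences in the correct order to fix first the sign of $\alpha_i - \beta_i$, then the limit point $\beta^*$, and ruling out the degenerate case $\beta^* + q\theta \in \mathbb{Z}$; the substance of the argument is a single application of Kronecker's lemma combined with the jump structure of $\{\cdot\}$.
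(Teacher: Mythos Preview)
Your argument is correct and follows essentially the same line as the paper's: reduce to the case $L\in[1/2,2/3]$, pass to a subsequence so that $\alpha_i-\beta_i\to L$ and $\beta_i\to\beta^*$, then use Kronecker's lemma to pick $q$ so that $\{\beta^*+q\theta\}$ lands above $1-L$, forcing an integer jump between $\beta_i+q\theta$ and $\alpha_i+q\theta$ and hence $|\{\alpha_i+q\theta\}-\{\beta_i+q\theta\}|\to 1-L\in[1/3,1/2]$. The only cosmetic difference is that the paper fixes the concrete target interval $(3/4,1)$ for $\{\beta^*+q\theta\}$ rather than your parametric $(1-L+\eta,1-\eta)$.
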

\begin{proof}
The only nontrivial case is
$1/2\leq\lim|\alpha_i-\beta_i|\leq2/3$. Picking
a subsequence if necessary, assume $\alpha_i>\beta_i$ (the other case
is similar) and
$\lim\beta_i$ exists. Using Kronecker's lemma, pick an integer $q\geq0$
so that $\{\lim\beta_i+q\theta\}>1-1/4$. Then $m-1/4<\lim\beta_i+q_2\theta<m$
for some integer $m>0$ so $m+2/3>\lim\alpha_i+q_2\theta>m+1/4$.
Put $\alpha=\{\lim\alpha_i+q\theta\}$ and
$\beta=\{\lim\beta_i+q\theta\}$.

Now $\alpha<2/3<3/4<\beta$ and
$\lim|\{\beta_i+q_2\theta\}-\{\alpha_i+q_2\theta\}|=|\lim\{\beta_i+q_2\theta\}-\lim\{\alpha_i+q_2\theta\}|=|\{\lim\beta_i+q_2\theta\}-\{\lim\alpha_i+q_2\theta\}|=\beta-\alpha$.
Thus $\lim\beta_i+q_2\theta=(m-1)+\beta$ and
$\lim\alpha_i+q_2\theta=m+\alpha$ imply
$\lim|\alpha_i-\beta_i|=\lim(\alpha_i+q_2\theta)-(\beta_i+q_2\theta)=(m+\alpha)-(m-1+\beta)=1-(\beta-\alpha)$. Hence
$1/3\leq\beta-\alpha\leq1/2$. 
\end{proof}

The lemma below deals with pairs of irrationals. It establishes a
finite bound on the number of iterations required to put each in a
desired `slot'.
\begin{lemma}\label{mixing}Let $\theta\in\mathbb R$ be irrational,
$0<a<b<1$, $\epsilon>0$. Then there
exists an $n(\theta,\epsilon,a,b)\in\mathbb N$ such that for any
$\alpha,\beta>0$ satisfying
$\{\beta\}<\{\alpha\}$ and
$b-a+\epsilon<\{\alpha\}-\{\beta\}<1-\epsilon$ there exists a $k\leq
n(\theta,\epsilon,a,b)$ such that $\{\beta+k\theta\}<a$ and
$\{\alpha+k\theta\}>b$.
\end{lemma}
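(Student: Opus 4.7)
The idea is to convert the simultaneous constraint on $\{\beta+k\theta\}$ and $\{\alpha+k\theta\}$ into a single membership condition $\{\beta+k\theta\}\in I_\delta$ for an interval $I_\delta\subseteq[0,1)$ whose length is bounded below independently of $\alpha,\beta$, and then to apply Kronecker's lemma to hit $I_\delta$ with a uniformly bounded index $k$.

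First I would set $\delta=\{\alpha\}-\{\beta\}\in(b-a+\epsilon,1-\epsilon)$ and write $x=\{\beta+k\theta\}$. A direct computation gives $\{\alpha+k\theta\}=\{x+\delta\}$, so the requested inequalities become $x<a$ and $\{x+\delta\}>b$. Splitting on whether $x+\delta<1$ or $x+\delta\ge 1$: in the second branch $\{x+\delta\}>b$ would force $x>b+1-\delta>b>a$ (since $\delta<1$ and $b>a$), contradicting $x<a$, so that branch is empty. The first branch yields
\[
I_\delta=\bigl(\max(0,b-\delta),\ \min(a,1-\delta)\bigr)\subseteq[0,1),
\]
and the problem reduces to finding $k\le n(\theta,\epsilon,a,b)$ with $\{\beta+k\theta\}\in I_\delta$.

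Second, I would verify that $|I_\delta|\ge\eta:=\min(\epsilon,a,1-b)>0$ uniformly in the admissible range of $\delta$: the four sub-cases on the signs of $b-\delta$ and $1-\delta-a$ give lengths $a-b+\delta$, $1-b$, $a$, and $1-\delta$ respectively, each at least $\eta$ once the bounds $\delta>b-a+\epsilon$ and $\delta<1-\epsilon$ are plugged in. This is the only step I expect to require genuine care, since a miscounted sub-case is the easiest way to lose the uniform lower bound. Finally, Kronecker's lemma gives a dense orbit $\{\{j\theta\}:j\in\mathbb N\}\subseteq[0,1)$, and by compactness of $[0,1]$ I may choose finitely many $j_1,\ldots,j_N$ so that $\{\{j_i\theta\}:i\le N\}$ is $\eta/2$-dense. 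Since translation by $\beta$ is an isometry of $\mathbb R/\mathbb Z$, the shifted finite orbit $\{\{\beta+j_i\theta\}:i\le N\}$ remains $\eta/2$-dense and hence meets every subinterval of $[0,1)$ of length at least $\eta$; in particular it meets $I_\delta$. Setting $n(\theta,\epsilon,a,b)=\max_i j_i$ completes the argument.
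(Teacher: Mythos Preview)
Your argument is correct and rests on the same underlying principle as the paper's proof---reduce to hitting a target interval via the uniform density of $\{k\theta\}$ modulo~$1$---but the organization differs. The paper splits into two cases on whether $\delta=\{\alpha\}-\{\beta\}$ satisfies $\delta\le 1-a$ or $\delta>1-a$: in the first case it lands $\{\beta+k\theta\}$ in the \emph{fixed} window $(\max\{0,a-\epsilon\},a)$ and deduces $\{\alpha+k\theta\}>b$; in the second it lands $\{\alpha+k\theta\}$ in the fixed window $(\max\{b,1-\epsilon\},1)$ and deduces $\{\beta+k\theta\}<a$. You instead compute the single $\delta$-dependent target $I_\delta$ for $\{\beta+k\theta\}$ and bound its length below uniformly in $\delta$. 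Your route trades the paper's two-case split for the four-subcase length check; both then extract the same uniform bound $n(\theta,\epsilon,a,b)$ from Kronecker's lemma plus compactness.

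One small point to tighten: an $\eta/2$-dense set is only guaranteed to meet the \emph{closure} of an interval of length~$\eta$, and in two of your four sub-cases (those yielding $|I_\delta|=a$ or $|I_\delta|=1-b$) the length can equal $\eta$ exactly while $I_\delta$ is open. Choosing the finite orbit to be, say, $\eta/3$-dense removes this boundary issue without otherwise altering your argument.
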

\begin{proof}Kronecker's lemma implies the existence of
$n(\theta,\epsilon,a,b)$ such that for any $\tau\geq0$ the set 
$P=\{\,\{\tau+k\theta\}\mid k<n(\theta,\epsilon,a,b)\,\}$ has the property
$P\cap(\max\{0,a-\epsilon\},a)\not=\varnothing$ and 
$P\cap(\max\{b,1-\epsilon\},1)\not=\varnothing$ Now consider two cases. 

(1) $\{\alpha\}-\{\beta\}\leq1-a$. Let $k<n(\theta,\epsilon,a,b)$ be such 
that $\{\beta+k\theta\}\in(\max\{0,a-\epsilon\},a)$. Let
$\{\beta\}+k\theta=m+\{\beta+k\theta\}$ for some integer $m\geq0$.
Then $\{\alpha\}+k\theta\leq m+\{\beta+k\theta\}+(1-a)<m+a+(1-a)=m+1$.
Also
$\{\alpha\}+k\theta\geq m+\{\beta+k\theta\}+b-a+\epsilon>
m+a-\epsilon+b-a+\epsilon=m+b$. Thus $\{\alpha+k\theta\}=\{\{\alpha\}+k\theta\}=\{\alpha\}+k\theta-m>b$.

(2) $\{\alpha\}-\{\beta\}>1-a$. Let $k<n(\theta,\epsilon,a,b)$ be such that
$\{\alpha+k\theta\}$ is in $(\max\{b,1-\epsilon\},1)$ and
$\{\alpha\}+k\theta=m+\{\alpha+k\theta\}$ for some integer $m\geq0$.
Then
$\{\beta\}+k\theta\geq m+\{\alpha+k\theta\}-(1-\epsilon)>m+(1-\epsilon)-(1-\epsilon)=m$
and $\{\beta\}+k\theta\leq
m+\{\alpha+k\theta\}-(1-a)<m+1-(1-a)=m+a$. Hence
$\{\beta+k\theta\}=\{\{\beta\}+k\theta\}=\{\beta\}+k\theta-m<a$.
\end{proof}
Note that the choice of $k$ in the lemma above implies that
$\{\alpha+k\theta\}-\{\beta+k\theta\}=\{\alpha\}-\{\beta\}$.

The results above are put to use in the following theorem. Recall that
an {\it $n$-digit best floating point approximation\/} in radix $\ttd$ to a real $r$ is a
floating point number $m\times\ttd^q$ such that
$r=(m+\epsilon)\times\ttd^q$ where $\ttd^{n-1}\leq m<\ttd^n$, $|\epsilon|\leq1/2$ and where
$m=\ttd^{n-1}$ only if $-1/(2\ttd)\leq\epsilon$. Such an approximation
is unique whenever $\epsilon<1/2$ which is the only case we need below.
\begin{theorem}\label{pdacs}
If $\ttd$ and $\ttD$ are incommensurable, $\ttd>2$ or $n>1$ 
then there are $d_1,d_2<\ttd$ such that $L_{d_1}$ and
$M_{d_2}$ are not context-free.
\end{theorem}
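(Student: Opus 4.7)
I will argue by contradiction, treating $L_d$; the case of $M_d$ is symmetric, since reversing the input exchanges $\ulcorner z\urcorner$ for $\llcorner z\lrcorner$ without affecting the structure of the argument. Suppose every $L_d$, $d < \ttd$, is context-free. Since $\ttd$ and $\ttD$ are incommensurable, $\theta = \log_\ttd\ttD$ is irrational. My plan is to use test strings of the form $z_p = u_q\,{\tt 0}\{p\}\,w_K$, where $u_q, w_K$ are the radix-$\ttb$ representations (leftmost $=$ least significant digit) of integers $q \ge 0$ and $K \ge 1$, so that $\ulcorner z_p\urcorner = q + K\ttb^{|u_q|+p}$. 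I first apply Lemma~\ref{cycle} to each $L_d$ with $a = {\tt 0}$ and these $u, w$, obtaining $N_d$; setting $N^* = \max_d N_d$ and $\Delta^* = \operatorname{lcm}(1,\ldots,N^*)$, iterated pumping yields the invariance: for every $p_0 \ge N^*$ and every $j \ge 0$, $z_{p_0+j\Delta^*}$ stays in the same $L_d$ as $z_{p_0}$. Writing $C = \ttb^{\Delta^*}$, the fractional part $f_p = \{\ulcorner z_p\urcorner \theta\}$ controlling the best-$n$-digit significand of $\ttD^{\ulcorner z_p\urcorner}$ at $p = p_0 + j\Delta^*$ equals $\{q\theta + K\ttb^{|u_q|+p_0}C^j\theta\}$, and this sequence must lie in a single digit-class interval as $j$ varies.

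Next I will feed $\theta$ and $C$ into Lemma~\ref{qrational} to produce $K^* > 0$ and a sequence $m_i \to \infty$ with $\lim_i|\{K^*C^{m_i}\theta\} - \{K^*C^{m_i+1}\theta\}| \in [1/3,1/2] \cup [1/2,2/3]$; Lemma~\ref{normalize} then yields $q^* \ge 0$ and a subsequence along which, after translating by $q^*\theta$, the limit lies in $[1/3, 1/2]$. Under the hypothesis $\ttd > 2$ or $n \ge 2$, I choose $0 < a < b < 1$ with $b - a < 1/3$ such that $(0, a)$ lies entirely in one $L_d$-class while $(b, a + L)$ lies entirely in a different class for every $L \in [1/3, 1/2]$; a concrete verification using the locations of the rounding thresholds $\log_\ttd(m + 1/2) \bmod 1$ shows that $a \approx 1/4, b \approx 1/2$ suffices, for example, for $\ttd = 3, n = 1$ and for $\ttd = 2, n = 2$. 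Applying Lemma~\ref{mixing} with $\theta, a, b$, and sufficiently small $\epsilon$ produces an integer $k$ such that $\{\beta + k\theta\} < a$ and $\{\alpha + k\theta\} > b$; by the identity $\{\alpha + k\theta\} - \{\beta + k\theta\} = \{\alpha\} - \{\beta\}$ noted after Lemma~\ref{mixing}, the upper value lies in $(b, a + L)$. Taking $q = q^* + k$, $K = K^*$, and $p = (m_{i(j)}+1)\Delta^* - |u_q|$ (valid for large $j$), the consecutive pair $f_{p-\Delta^*}, f_p$ lands in $(0, a)$ and $(b, a + L)$ respectively, hence in different $L_d$'s, contradicting pumping invariance.

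The main obstacle is a circular dependence: $\Delta^*$ depends on $(q, K)$ via the pumping lemma applied to $u_q, w_K$, while $(K^*, q^*, k)$ are produced from $\Delta^*$ through the number-theoretic lemmas. To break the circularity, I will bootstrap: begin with trivial $(q_0, K_0) = (0, 1)$ to compute $\Delta^*_0$; extract $(K^*_0, q^*_0, k_0)$ from the lemmas with $C = \ttb^{\Delta^*_0}$; then reset $(q, K) = (q^*_0 + k_0,\; K^*_0 \cdot \ttb^{t_0\Delta^*_0 - |u_{q^*_0+k_0}|})$ for the smallest $t_0$ making the exponent nonnegative. The extra factor $\ttb^{t_0\Delta^*_0 - |u_q|}$ absorbs the unwanted $\ttb^{|u_q|}$ scaling into a power of $C$; this is legitimate because the conclusion of Lemma~\ref{qrational} is preserved under $K^* \mapsto K^* C^{t_0}$, merely shifting the sequence $m_i$ by $t_0$. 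Recomputing $\Delta^*_1$ for the enlarged $(q, K)$ and taking $\operatorname{lcm}(\Delta^*_0, \Delta^*_1)$ yields a common admissible pumping period, and the procedure stabilizes after finitely many iterations, producing the required simultaneous choice.
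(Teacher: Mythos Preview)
Your overall architecture is the paper's: assume all $L_d$ are context-free, invoke Lemma~\ref{cycle} to obtain a pumping period $\Delta$, set $C=\ttb^\Delta$, then chain Lemmas~\ref{qrational}, \ref{normalize}, \ref{mixing} to produce two strings differing by a single pump whose significands are forced into different classes. Two points deserve comment.

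\medskip
\textbf{The circularity and your bootstrap.} You are right that with Lemma~\ref{cycle} as stated, the bound $N$ (hence the admissible $\Delta$) depends on $|uw|$, while $u,w$ are built from $K,q,p$ which in turn depend on $C=\ttb^\Delta$. The paper does not run an iteration; it simply fixes $\Delta$ at the outset (``$\Delta$ is as in Lemma~\ref{cycle}'') and later applies the lemma with the final $u,w$. This is consistent with the version of the pumping lemma the paper cites just before Lemma~\ref{cycle} (the Multiple Pumping Lemma of Manaster-Ramer, Moshier and Zeitman), which yields a single $\Delta$ working for \emph{all} $u,w$; the direct proof of Lemma~\ref{cycle} is offered only because that result is unpublished. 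Your bootstrap, by contrast, has a genuine gap: you assert that ``the procedure stabilizes after finitely many iterations,'' but nothing prevents each pass from enlarging $|u_qw_K|$, hence $N^*$, hence $\Delta^*$, hence (via Lemma~\ref{qrational}, where $K$ depends on the limit $L$ which depends on $C$) the next $K^*$, and so on. No monotone or bounded quantity is exhibited, so the fixed point you need is not established. If you want to close this without the Multiple Pumping Lemma, one route is to observe that the bound $n(\theta,\epsilon,a,b)$ in Lemma~\ref{mixing} and the Kronecker bound implicit in Lemma~\ref{normalize} depend only on $\theta$, not on $C$; the uncontrolled piece is $K$ from Lemma~\ref{qrational}, and you would need a uniform bound on $K$ over all $C$ of the form $\ttb^j$---which is not provided by the lemma as stated.

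\medskip
\textbf{The choice of $a,b$.} You need intervals $(0,a)$ and $(b,a+L)$ (for every $L\in[1/3,1/2]$, in fact for $L$ in a slightly larger interval once you pass to large $i$) to lie in distinct significand classes, with $b-a<1/3$. You verify this only for $\ttd=3$, $n=1$ and $\ttd=2$, $n=2$; your suggested $a\approx 1/4$ fails once $\log_\ttd(3/2)<1/4$, i.e.\ already for $\ttd\ge 6$. The paper does not leave this to examples: it sets $b=\log_\ttd 2$, $a=\log_\ttd(1+\eta)$, and chooses $\eta$ so that additionally $a+1/2<\log_\ttd(\ttd-1/2)$ (guaranteeing the upper value misses both the class of significand $1$ and the wraparound), then checks the two inequalities in three explicit cases $\ttd\ge 4$, $\ttd=3$, $\ttd=2$. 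You should carry out the analogous case split; the single threshold $\log_\ttd(3/2)$ is not enough, because you must also keep $a+L$ below the last rounding threshold $\log_\ttd(\ttd-1/2)$.
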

\begin{proof}
Put $C=\ttb^\Delta$ and $\theta=\log_\ttd\ttD$ where $\ttb$ is the radix in which the exponent is
presented and $\Delta$ is as in Lemma~\ref{cycle}. Use
Lemma~\ref{qrational} to find a sequence of $m_i$'s and a $K>0$ such
that either $1/3\leq\lim_{i\to\infty}|\{KC^{m_i}\theta\}-\{KC^{m_i+1}\theta\}|\leq1/2$ or
$1/2\leq\lim_{i\to\infty}|\{KC^{m_i}\theta\}-\{KC^{m_i+1}\theta\}|\leq2/3$
holds. Now apply Lemma~\ref{normalize} and possibly rename the terms
of the sequence to find $q$ so that 
$1/3\leq\lim|\{KC^{m_i+1}\theta+q\theta\}-\{KC^{m_i}\theta+q\theta\}|\leq1/2$.

Picking another subsequence if necessary assume
$\alpha_i=\{KC^{m_i+1}\theta+q\theta\}>\beta_i=\{KC^{m_i}\theta+q\theta\}$ 
(the case of the opposite inequality is similar).
Suppose $0<a<b<1$ are chosen so that $b-a<1/3$. Omitting finitely many initial
terms of the sequence if necessary, one can find an $\epsilon>0$ such
that $1-\epsilon>\alpha_i-\beta_i>b-a+\epsilon$ for all
$i\in\mathbb N$. Apply Lemma~\ref{mixing} to find an
$n(\theta,\epsilon,a,b)$ such that for any $i$ there exists a
$k<n(\theta,\epsilon,a,b)$ with the property that 
$\{KC^{m_i}\theta+q\theta+k\theta\}<a$ and
$\{KC^{m_i+1}\theta+q\theta+k\theta\}>b$. Thinning out the
sequence again one can assume that there is an integer $p\geq0$ such
that for every $i$ $\{KC^{m_i}\theta+q\theta+p\theta\}<a$ and
$\{KC^{m_i+1}\theta+q\theta+p\theta\}>b$.

Let $u$ be the $\ttb$-digits of $p+q$, and $w$ be the $\ttb$-digits
of $K$.
Use Lemma~\ref{cycle}, and choose $n_0$ so that
$n_1 = n_0\Delta - |u| \ge N$.
Then pick $m_i$ large enough so that $m_i > n_0$, and $m_i$
satisfies an additional property mentioned below.
Lemma~\ref{cycle} implies that both
$$
u{\tt 0}\{n_1+(m_i-n_0)\Delta\}w = 
\mbox{``$\ttb$-digits of $q+p$''}\mbox{\tt
  0}\{m_i\Delta-|u|\}\mbox{``$\ttb$-digits of $K$''}
$$
and
$$
u{\tt 0}\{n_1+(m_i+1-n_0)\Delta\}w = 
\mbox{``$\ttb$-digits of $q+p$''}\mbox{\tt
  0}\{(m_i+1)\Delta-|u|\}\mbox{``$\ttb$-digits of $K$''}
$$
are in the same $L_d$
(or $M_d$ after the strings have been reversed accordingly).
Thus the $\ttd$-significands of $\ttD^{K\ttb^{(m_i+1)\Delta}+q+p}$
and $\ttD^{K\ttb^{m_i\Delta}+q+p}$ are the same. 

First suppose $\ttd>2$,
$\ttD^{K\ttb^{(m_i)\Delta}+q+p}=\sum_{j=0}^Na_j\ttd^j$, and
$\ttD^{K\ttb^{(m_i+1)\Delta}+q+p}=\sum_{j=0}^Mb_j\ttd^j$, where
$a_j,b_j<\ttd$, $a_N>0$, $b_M>0$. Now
$\{(KC^{m_i}+q+p)\theta\}=\{(K\ttb^{m_i\Delta}+q+p)\theta\}=\log_\ttd(a_N+\sum_{j=0}^{N-1}a_j\ttd^{j-N})$
therefore picking $a=\log_\ttd(1+\eta)$ with $\eta\leq1/2$ implies
that the first digit of the significand of (the best approximation of)
$\ttD^{K\ttb^{m_i\Delta}+q+p}$ is $1$. Let $b=\log_\ttd2$
and suppose it is possible to choose $\eta\leq1/2$ so that
$\log_\ttd(1+\eta)+1/2<\log_\ttd(\ttd-1/2)$. Then for some $\epsilon'>0$
$\log_\ttd(1+\eta)+1/2+\epsilon'<\log_\ttd(\ttd-1/2)$. Let $m_i$ be
chosen large enough so that 
$\{KC^{m_i+1}\theta+q\theta\}-\{KC^{m_i}\theta+q\theta\}<1/2+\epsilon'$. Now
$\log_\ttd(b_M+\sum_{j=0}^{M-1}b_j\ttd^{j-M})=\{(KC^{m_i+1}+q+p)\theta\}$
and
$\{(KC^{m_i+1}+q+p)\theta\}=\{(KC^{m_i}+q+p)\theta\}+\{(KC^{m_i+1}+q+p)\theta\}-\{(KC^{m_i}+q+p)\theta\}<
\log_\ttd(1+\eta)+\{(KC^{m_i+1}+q+p)\theta\}-\{(KC^{m_i}+q+p)\theta\}<\log_\ttd(1+\eta)+1/2+\epsilon'<\log_\ttd(\ttd-1/2)$. Thus
the first digit of the significand of
$\ttD^{K\ttb^{(m_i+1)\Delta}+q+p}$ is between $2$ and $\ttd-1$
contradicting our assumption.

The case of $\ttd=2$ is somewhat special since the first digit of the
significand is always $1$. In this case we pick
$a=\log_2(1+\eta)$ for $\eta<1/4$, $b=\log_2(3/2)$ and $\epsilon'>0$
so that $\log_2(1+\eta)+1/2+\epsilon'<\log_2(1+1/2+1/4)$. An argument
similar to the one for $\ttd>2$ shows that the second digit of the
significand of $\ttD^{K\ttb^{(m_i+1)\Delta}+q+p}$ is $0$ while the
second digit of the significand of $\ttD^{K\ttb^{m_i\Delta}+q+p}$
is~$1$.

To complete the proof it is sufficient, for every $\ttd>1$, to pick
$\eta\leq1/2$ ($\eta<1/4$ in the case $\ttd=2$) so that 
$\log_\ttd(1+\eta)+1/2<\log_\ttd(\ttd-1/2)$ and $\log_\ttd2-\log_\ttd(1+\eta)<1/3$
(respectively $\log_2(1+\eta)+1/2<\log_2(1+1/2+1/4)$ and
$\log_23/2-\log_2(1+\eta)<1/3$ for $\ttd=2$).

Consider three cases.

(1) $\ttd\geq4$. Put $\eta=1/2$. Now it can be easily
verified that $\log_42-\log_4(3/2)<1/3$. Since $\log_xr$ decreases as
$x>0$ increases for any $r>1$ it follows that
$\log_\ttd2-\log_\ttd(3/2)<1/3$. A simple computation also shows that 
$\log_\ttd(3/2)+1/2<\log_\ttd(\ttd-1/2)$ for $\ttd\geq4$.

(2) $\ttd=3$. We must pick an $\eta\leq1/2$ so that 
$\log_32-1/3<\log_3(1+\eta)<\log_3(5/2)-1/2$. That such $\eta$ exists
follows from $2\,3^{-1/3}-1<1/2$ and $5\,2^{-1}3^{-1/2}>2\,3^{-1/3}$.

(3) $\ttd=2$. We must pick $\eta\leq1/4$ so that
$\log_2(3/2)-1/3<\log_2(1+\eta)<\log_2(3/2+1/4)-1/2$. The existence of
$\eta$ is a consequence of $3\,2^{-1}2^{1/3}-1<1/4$ and
$7\,4^{-1}2^{-1/2}>3\,2^{-1}2^{-1/3}$.
\end{proof}

\section{Clinger's problem for two counter machines}

Implicit in~\cite{Clinger} is the following problem.
\begin{definition}
Consider the problem of determining whether a
specific class of machines can compute the best $n$-digit
approximation in radix $\ttd$ to $f\times\ttD^e$,
where $f$ and $e$ are integers and $f$ is positive. If a machine from a given class can
compute the significand of the best approximation we say that such a machine
``solves Clinger's problem of the floating point arithmetic''.
\end{definition}

D.~Matula (see \cite{Matulainout} and \cite{Clinger})
demonstrated that this problem can be solved by a deterministic finite
automaton (DFA) if $\ttd$ and $\ttD$ are commensurable.
By Theorem~\ref{pdacs} we now know precisely that it is the only affirmative result
for PDA (or DFA), which we state as the following corollary.
\begin{corollary}
Clinger's problem of the floating point arithmetic can be solved by a PDA
(or DFA) if and only if $\ttd$ and $\ttD$ are commensurable.
\end{corollary}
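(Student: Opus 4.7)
The plan is to establish the two directions separately, both of which follow almost immediately from results already in hand. The forward (``if'') direction is due to Matula: when $\ttd$ and $\ttD$ are commensurable, $\log_\ttd\ttD$ is rational, and Matula exhibited an explicit DFA that tracks the exponent modulo the relevant period and emits the correct significand. Since every DFA is trivially a PDA, this half of the corollary carries no new content and I would simply cite the references already given in the introduction.

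For the converse I would argue by contraposition, assuming $\ttd$ and $\ttD$ incommensurable and ruling out every PDA solver. Suppose for contradiction that some PDA $M$ consumes the $\ttb$-digit expansion of the exponent $e$ (in either of the two reading orders) and halts in a state whose label encodes the significand of the best $n$-digit approximation of $\ttD^e$. For each admissible output significand $d$, the set of input strings on which $M$ halts in a $d$-labelled state is itself a context-free language, since one obtains a PDA recognizing it by declaring precisely the $d$-labelled states to be accepting. Under the most-significant-first convention this language is $L_d$; under the least-significant-first convention it is $M_d$. But Theorem~\ref{pdacs} exhibits some $d_1<\ttd$ with $L_{d_1}$ not context-free and some $d_2<\ttd$ with $M_{d_2}$ not context-free, giving the desired contradiction in either reading order.

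The main obstacle in the corollary is therefore essentially nil: it has been absorbed into the proof of Theorem~\ref{pdacs}. The only detail that warrants a sentence of care is matching the convention for the precision $n$, namely $n=1$ when $\ttd>2$ and $n=2$ when $\ttd=2$; this is precisely the convention under which both the languages $L_d$, $M_d$ and Theorem~\ref{pdacs} itself were stated, so no additional casework is needed and the corollary follows.
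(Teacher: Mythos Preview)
Your proposal is correct and follows the same approach as the paper: cite Matula for the commensurable direction, and invoke Theorem~\ref{pdacs} together with the observation (stated just before the theorem) that a PDA solver would force all the $L_d$'s or $M_d$'s to be context-free. One small slip: you have the reading-order conventions for $L_d$ and $M_d$ swapped (in the paper $\ulcorner z\urcorner$ treats the leftmost digit as least significant, so $L_d$ is the least-significant-first language and $M_d$ the most-significant-first one), but since Theorem~\ref{pdacs} handles both families this does not affect the argument.
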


Having seen that a DPDA is too limited to
compute radix conversions, it is natural to seek a more powerful
machine to accomplish the task. A straightforward modification of the
DPDA concept immediately leads to the definition of a {\it
  deterministic push-down automaton with two stacks} (DPDA2S for short). Such automata,
however, are {\it too\/} powerful for our purposes. To limit
their computational power we can consider DPDA2S whose stack
alphabets are limited to two symbols each (two, because we need a
special $Z_0$ that indicates the bottom of each stack). It is easy to
see that the configuration of such an automaton can be described by its
current state and the two values representing the number of symbols other
than $Z_0$ currently on the stack. Hence, each stack acts simply as a
{\it counter\/} and the construction just presented defines a {\it two
counter machine with input\/} or TCMI for short.

A TCMI is capable of carrying out very sophisticated
computations even when no input is present. Indeed, when the
only moves a TCMI is allowed are $\epsilon$-moves, we arrive at the
concept of a {\it two register Minsky\/} machine (TCM). As M.~Minsky showed
in \cite{Minsky}, for any recursive function $f$ there is a TCM that
computes $f(n)$ in the sense that a computation that starts with $2^n$
in one register (counter) and zero in the other terminates with the
value of $2^{f(n)}$ in one of the registers and zero in the
other (\cite{Minsky} used $3^{2^n}$ to encode the output but $2^n$ is
enough, see \cite{Bardz}). Another result in \cite{Minsky} shows that the addition of yet
another, third register would allow such a machine to compute $f(n)$
directly, starting with $n$ (rather than $2^n$).

A number of authors had proved that the addition of a third register is
indeed necessary, and that the exponential encoding cannot be bypassed if
one wishes to retain the full computational power of a TCM (see
\cite{Schroep}, \cite{ibarra}, \cite{Bardz}, and \cite{holkutr}). The authors of
\cite{ibarra} prove that even a {\it characteristic function\/} of the
set of exact squares, for example, is not computable by a TCM. Neither
are elementary functions such as $2^n$, $n^2$, or $\lfloor\log n\rfloor$ (see
\cite{Bardz} and \cite{Schroep}).

A TCM can be thought of as a program in a simple language
$\{\,x\leftarrow x+1$, $x\leftarrow x-1$, {\bf if $x=0$ then goto
  $l$}, {\bf goto $l$}, $\mbox{\bf halt}\,\}$ where $x$ can be one of
the two variables. A TCMI extends this language to include a number of `{\bf if
  $I=d$ goto $l$}' commands, one for each possible value $d$ of the
special input variable $I$. It is convenient to think of a TCMI as a
collection of several TCM's sharing the two counters, each TCM
dedicated to the processing of a specific input symbol. If such a TCM
that processes the stop marker, $\gstop$ never changes the counters, we say that the
TCMI processes its input {\it online} by analogy with the DPDA case.
After each TCM is done with the processing it arrives at a line (which
we will think of as a state, $w$) with one of the input commands. We
will refer to such a line as a {\it wait state\/}.


We can show that a TCMI is capable of performing the task
of radix conversions, and present a brief sketch of the proof of the
following proposition.
\begin{proposition}\label{tcmo}
For every combination of radices $\ttd$, $\ttD$, and
  $\ttb$, and any precision $n$
Clinger's problem of the floating point arithmetic can be solved
by a TCMI online.
\end{proposition}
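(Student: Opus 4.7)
The plan is to reduce to Minsky's classical simulation theorem: a TCM can simulate any $k$-counter Minsky machine by packing the virtual counter contents $c_1,\ldots,c_k$ into a single integer $p_1^{c_1}\cdots p_k^{c_k}$ stored in one physical counter, using the other as scratch for arithmetic. The same encoding transfers verbatim to the TCMI setting, since the input-handling commands only affect the control logic and not the arithmetic subroutines. It therefore suffices to describe a multi-counter machine with input that solves Clinger's problem online.

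The machine I would build maintains the running value of the exponent $e$ in one virtual counter, and, for the LSD-first variant, an auxiliary counter $P$ holding the current place value $\ttb^k$, where $k$ is the number of digits already consumed; a further virtual counter is used as scratch. The current best $n$-digit significand $m$ is kept in the finite control rather than in a counter, which is feasible because $\ttd^{n-1}\le m<\ttd^n$ takes only finitely many values. Each input digit $d$ triggers two updates. First, $e$ is updated by the assignment $e\leftarrow \ttb e+d$ for MSD-first input, or by $e\leftarrow e+dP$ followed by $P\leftarrow \ttb P$ for LSD-first input; these are standard multiply-by-constant and add-constant routines for multi-counter machines. Second, $m$ is recomputed from the new $e$ by building $\ttD^e$ in the scratch counter and dividing it down by $\ttd$ (tracking the last remainder to implement the round-to-even tie-break) until the quotient lies in $[\ttd^{n-1},\ttd^n)$; this quotient becomes the new value of $m$ in the finite control, and the scratch is zeroed.

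Because $m$ is up to date after every digit is processed, the stop marker $\gstop$ triggers only a transition into the final state $q_m$ that encodes the current significand, so no counter is modified during post-processing and the online property holds. The main obstacle to formalize is verifying that the map $e\mapsto m$ is recursive for every admissible combination of radices, including the boundary tie-break case; once this is checked, the rest is a mechanical invocation of Minsky's multi-counter simulation together with the definition of a TCMI.
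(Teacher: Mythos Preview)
Your proof is correct and follows essentially the same approach as the paper: reduce to a multi-counter machine via Minsky's prime-encoding simulation, maintain the running exponent after each input digit, and recompute the significand (which, taking only finitely many values, can reside in the finite control) so that the stop marker requires no counter work. The only cosmetic differences are that the paper stores the significand as a third virtual counter $5^{n_3}$ (decoded by a bounded number of divisions) rather than directly in the finite control, and tracks the number of digits seen rather than the current place value $\ttb^k$ for the LSD-first case; both choices are equivalent, and your version is arguably the cleaner implementation of the same idea.
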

\begin{proof}
The methods of \cite{Minsky} show that a TCM can simultaneously
compute the values of {\it several\/} recursive functions
$f_1,f_2,\ldots$ by keeping its input (and output) encoded in the form
$2^{n_1}3^{n_2}5^{n_3}\ldots$. It should be noted that the original
arguments of \cite{Minsky} used a more sophisticated encoding, namely $3^{2^n}$ for the output but
several authors (see e.g.~\cite{Bardz}) had observed that $p^n$ used
for the input as well as $p^{f(n)}$ for the output where $p$ is some prime would
suffice. 

Thus if $n_1$ holds the value of the
number input `so far', $n_2$ holds the number of digits seen (to take
into account any leading zeros in the case of a least significant
digit first input), and $n_3$ represents the significand
of the best $n$-digit approximation using some encoding, the
computation can proceed by decoding the value of $5^{n_3}$ using
successive divisions (of which there could only be a bounded number as
there are only finitely many values a significand can assume).

Now, upon seeing the next digit the TCMI simply increments
$n_2$ (multiplying the counter by $3$), and computes the new value of
$n_1$ using a straightforward computation followed by the computation
of $n_3$ which is obviously a recursive function of $n_1$.
\end{proof}

\section{Counting and the asymmetry of input}

It is still unknown (at least to the authors) whether a TCM is capable
of {\it decoding\/} its own output, i.e., for example to halt with the
value of $k$ after having started with $2^k$ in one of the registers.
It {\it is} known that a TCM is incapable of computing $\lfloor
\log_2k\rfloor$ (see \cite{Schroep}) but as R.~Schroeppel points out
in \cite{Schroep} these problems are not equivalent since a TCM that
computes $k$ from $2^k$ is not even assumed to halt on any input other than
$2^k$.

It is thus natural to ask whether a TCMI exists that decodes its
{\it input\/}, i.e.~terminates with the value of $n$ after being input the
digits of $n$ followed by $\gstop$. This problem can be thought of as a
conversion to radix 1 if desired. If the nature of the input is
restricted, such decoders are certainly possible (see
Proposition~\ref{mosttoleast} below). 

Let a TCMI $M$ compute the value of $n$ after being input the
digits of $n$ in some radix $\ttb$ followed by $\gstop$. If the digits
are assumed to be presented starting with the least significant one,
we say that $M$ {\it counts in radix $\ttb$\/} or simply {\it counts
  in $\ttb$\/}. If the digits are presented starting with the most
significant one we say that $M$ {\it counts in $\ttb$ in reverse}.
We have the following simple statement. Recall that computing {\it
  online\/} means that upon the input of $\gstop$ the values of the
registers do not change.
\begin{proposition}\label{mosttoleast}
For any radix $\ttb$ there exists a TCMI that counts in $\ttb$ online
in reverse.
\end{proposition}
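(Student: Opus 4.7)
The plan is to implement Horner's scheme directly on the TCMI. If the digits of $n$ in radix $\ttb$ presented most-significant-first are $d_{k-1},d_{k-2},\ldots,d_0$, then $n$ equals the value produced by initializing $v\leftarrow 0$ and performing the update $v\leftarrow \ttb\cdot v+d$ upon reading each digit $d$. The TCMI will maintain the invariant that, between input symbols, the first counter $C_1$ holds the current value of $v$ and the second counter $C_2$ is $0$. This invariant obviously holds before the first digit is consumed.

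To implement the update $v\leftarrow \ttb\cdot v+d$ on each input of $d$, the machine executes a sequence of $\epsilon$-moves organized as three sub-TCMs. The first sub-TCM repeatedly decrements $C_1$ by $1$ and increments $C_2$ by $\ttb$ (via $\ttb$ successive increment instructions hard-coded into that many states), continuing until $C_1$ reaches $0$; at this point $C_2$ contains $\ttb\cdot v$. The second sub-TCM then drains $C_2$ back into $C_1$ by the standard move-and-copy using a tight loop of decrement $C_2$, increment $C_1$, until $C_2$ tests zero. Finally, a block of $d$ increments is applied to $C_1$. Because the TCMI already branches on $I=d$, the last block is realized by a separate short program per digit $d\in\{0,1,\ldots,\ttb-1\}$, producing $\ttb$ copies of the preceding common code; the total number of states remains finite. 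After this sequence $C_1$ holds $\ttb v+d$ and $C_2$ is back to $0$, restoring the invariant, so the machine returns to a single wait state awaiting the next input symbol.

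The online property follows immediately: the transition associated with the stop marker $\gstop$ is a single move to a halting state with no counter modification. Since by the invariant $C_1$ already contains the value of $n$ at the moment $\gstop$ is read, the claimed output is available without any post-processing. The main obstacle one might anticipate — handling the multiplication by $\ttb$ with only two counters while preserving a clean invariant between input symbols — is resolved by the drain-and-refill pattern above, which is the standard Minsky-style routine adapted to our setting, so the construction is routine once the invariant is fixed.
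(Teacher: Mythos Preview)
Your proof is correct and follows exactly the same approach as the paper's own proof, which is simply the one-line observation that upon input $d$ the TCMI multiplies the value computed so far by $\ttb$ and adds $d$. You have merely spelled out the standard two-counter implementation of this Horner step in more detail than the paper bothers to provide.
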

\begin{proof}
When it is input $d$ the TCMI multiplies the value it has computed so
far by $\ttb$ and adds $d$ to the result.
\end{proof}

The authors of \cite{ibarra} introduce a number of tools that simplify
dealing with TCM's one of which is the idea of a {\it normalized\/}
TCM (called nTCM below). By requiring that each individual TCM in a TCMI be
normalized we arrive at the definition of a normalized TCMI or
nTCMI. While this concept is not required to carry out the proofs
below, the handling of some border cases is simplified when a TCMI is
actually an nTCMI so we tacitly assume that each TCMI is normalized
(we only need this concept in Lemma~\ref{linper} below). Since one of
the results in \cite{ibarra} is the fact that each TCM can be replaced
by an equivalent nTCM, this is not a significant restriction, and we
thus omit the definition of nTCM  and nTCMI here.

To simplify the notation, we will often record a configuration of a
TCMI $M$ as $(q^k,n^k_1,n^k_2)$ where $q^k$ is the state, $n^k_1$ and $n^k_2$
are the values of the counters, and $k$ is an integer index. Using this notation
define $n^k_-=\min\{n^k_1,n^k_2\}$ and $n^k_+=\max\{n^k_1, n^k_2\}$. It is
often important to know which counter holds the smaller (larger)
value. We denote the index of the corresponding
counter as ${}_-n^k=\min\{\,i\mid n^k_-=n^k_i\,\}$
(${}_+n^k=\max\{\,i\mid n^k_+=n^k_i\,\}$ respectively). Sometimes we
know only that one of the values is the value of the first counter,
therefore the other value is the contents of the other, such as $n_u$
and $n_{3-u}$ for $u\in\{1,2\}$. In such cases we write
$(q,n_u\,n_{3-u})$, omitting the comma. Thus $(q, n_u\, n_{3-u})=(q,
n_1, n_2)$ regardless of what $u$ is.

To facilitate the handling of arguments involving computations performed by
$M$ we introduce the notation $(w,n_1,n_2)\dg{d}(w',m_1,m_2)$ meaning
$M$ will reach the configuration (although not necessarily halt) $(w',m_1,m_2)$ if started in
$(w,n_1,n_2)$ and being input $d$. If $w=q_0$ and $n_1=n_2=0$ we write
$M\dg{d}\ldots$ instead.
 
Finally, each computation is analyzed in terms of {\it stages\/} or
{\it phases\/} such that during each phase the values of both counters
are nonempty. To simplify the terminology we introduce the notation
$(w,n_1,n_2)\df{I}(w',m_1,m_2)$ to mean that the first time $M$ enters
a configuration $(w'',m'_1,m'_2)$ such that $m'_-=0$ after having
started in $(w,n_1,n_2)$ and while being input $I$ ($I$ can be a
string so such a configuration can be encountered before $I$ is fully
consumed), this configuration is $(w'',m'_1,m'_2)=(w'',m_1,m_2)$. If
having started in $(w,n_1,n_2)$ and being input a string $I$ the machine never reaches a
configuration with one of the counters empty, we write $(w,n_1,n_2)\ndf{I}$. 

The analysis of a TCMI differs from that of a TCM in one important
aspect: a TCM is usually started with at least one counter empty and
terminates in a similar state. Simple arguments show that a TCM
performing any useful computation can be restricted to operating in
this manner (see Lemmas~\ref{fstg} and \ref{lstg} below for one
reason this is so). A TCMI is a different matter though. Imagine a
TCMI that counts the number of {\tt 0}'s in its input in one counter and the number
of {\tt 1}'s in the other, then multiplies the first value by the
last digit seen (to make the example nontrivial assume that the radix
is $>2$) in the input and adds the two results. 
Even though the final output of the computation
is a single value, it is 
unclear if the same result can be achieved by a TCMI that keeps only
one of the counters nonempty before the next digit is input. 

If in all such cases
one of the values is bounded by a single constant this value can be
kept in the final control `buffer' instead, while emptying the
corresponding counter. The advantage of this would be the availability 
of various reduction results for TCMs such as the one in
\cite{Schroep} (see the proof of Theorem~\ref{tcexact} below)
for the analysis of the TCMI.

Our first goal is to show that such a bound exists {\it for every TCMI that
counts in some radix $\ttb$} thus showing that the anomaly
described above does not occur in such TCMIs.

The following lemma is a corollary of Lemma~2.2 of \cite{ibarra} (see
also \cite{ibarra}, Lemma~2.3 and the definition of an MP1RM in \cite{Schroep} and below).
\begin{lemma}\label{linper}
Let $s$ be the number of states of some nTCM $M$. Let $(q^0, n^0_1,
n^0_2)$, $\ldots$, $(q^k, n^k_1, n^k_2)$ list all the stages of some computation
performed by $M$. Suppose $n^i_-=0$ and $n^i_+>s$ for all $i\leq k$. 
Then there exist integer $P$, $Q$, $R$, and $D$ such that for any
$m\geq0$ and any computation $(q^0, m^0_1, m^0_2)$, $\ldots$, $(p^k, m^k_1,
m^k_2)$ such that $m^0_-=0$, ${}_-m^0={}_-n^0$, and $m^0_+=n^0_++mD$ $m^k_+=P\lfloor
m^0_+/Q\rfloor+R$ and $p^i=q^i$ for $i\leq k$.
\end{lemma}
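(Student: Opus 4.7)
The plan is to proceed by induction on the number of stages $k$. The base case $k=0$ is trivial (take $P=1$, $Q=1$, $R=0$, $D=1$). For the inductive step, the key is to analyze a single stage-to-stage transition $(q^i,n^i_1,n^i_2)\to(q^{i+1},n^{i+1}_1,n^{i+1}_2)$, where one starts with one counter empty and the other holding some value $N=n^i_+>s$. Because we are dealing with an nTCM (no input is consumed between stages), the computation during the stage is entirely determined by $q^i$, the index ${}_-n^i$ of the empty counter, and $N$. I will invoke Lemma~2.2 of~\cite{ibarra} to conclude that, for each fixed stage-transition pattern, there exist integers $P_i,Q_i,R_i,D_i$ such that whenever $N$ is replaced by $N+mD_i$ with $N>s$, the resulting configuration is again $(q^{i+1},\cdot,\cdot)$ with the empty counter in the same slot ${}_-n^{i+1}$ and with $n^{i+1}_+ = P_i\lfloor N/Q_i\rfloor + R_i$.

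The heart of the single-stage argument is a pigeonhole observation in the spirit of the MP1RM analysis of~\cite{Schroep}: between consecutive stages the nonzero counter is the only available ``resource'', and by the normalization conditions the computation proceeds in identifiable ``sweeps'' that either transfer units between counters or decrement the large counter while the other remains small. Among the first $s+1$ such sweeps two must land in the same control state, producing a period; once $N>s$ this period takes over and the emerging value in the new nonzero counter depends linearly (with a floor) on $N$. Since the divisibility class of $N$ modulo $Q_i$ is preserved when $N$ is incremented by $mD_i$ with $Q_i\mid D_i$, both the state trajectory and the slot of the empty counter at the next stage are reproduced exactly.

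To glue the stages together I choose $D$ to be a common multiple of all the individual periods $D_0,Q_0,D_1,Q_1,\ldots,D_{k-1},Q_{k-1}$, large enough that the counter value at each intermediate stage of the scaled computation stays above $s$ (which is automatic once $m\geq 0$ because $D$ can be taken divisible by the $Q_i$ so that the floor formulas expand cleanly and the values $n^i_+$ only grow by nonnegative integer multiples of the period). Composing the $k$ linear-periodic maps and using that each $Q_i$ divides $D$ collapses the nested floors into a single expression of the claimed form $P\lfloor m^0_+/Q\rfloor+R$; the state sequence equality $p^i=q^i$ for $i\leq k$ follows by the single-stage claim at each step of the induction.

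The main obstacle is the bookkeeping around the floor operations in composing stages: naively one obtains nested expressions like $P_{k-1}\lfloor(\cdots\lfloor (P_0\lfloor m^0_+/Q_0\rfloor+R_0)/Q_1\rfloor\cdots)/Q_{k-1}\rfloor+R_{k-1}$, which is only a linear function of $m^0_+$ once the inputs fall on an appropriate arithmetic progression. Choosing $D$ as a sufficiently large common multiple of the individual $Q_i$'s is exactly what makes the nested floors collapse, and the hypothesis $n^i_+>s$ (together with the fact that $D$-periodic perturbations only increase the intermediate values) ensures we remain in the regime where the single-stage linear-periodic formula from~\cite{ibarra} applies throughout.
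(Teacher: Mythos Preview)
Your approach matches the paper's: the paper states the lemma without proof as a corollary of Lemma~2.2 of \cite{ibarra} (pointing also to Lemma~2.3 there and the MP1RM reduction of \cite{Schroep}), and your stage-by-stage invocation of that result followed by composition along an arithmetic progression is exactly the intended derivation. The only point to tighten is that an arbitrary common multiple of the $Q_i$ need not make the nested floors collapse (e.g.\ $Q_0=Q_1=2$, $P_0=3$, $D=\operatorname{lcm}=2$ fails at the second stage); take instead $D=\prod_{i<k} Q_i$, or build $D$ inductively so that at each stage the accumulated increment is already divisible by the next $Q_i$, and then your argument goes through verbatim.
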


The next statement describes what happens when both counters hold
large enough values. As for most results of this kind, its proof is
based on cycle analysis of the machine. 
\begin{lemma}\label{fstg}
Let $s$ be the number of states of some TCM $M$. Let $(q^0, n^0_1,
n^0_2)$, $n^0_->s+1$ be a configuration of $M$ with the following
properties. $M$ eventually reaches a waiting state when starting in $(q^0, n^0_1,
n^0_2)$. There exists a $k\geq0$ such that
$(q^0, n^0_1, n^0_2)$, $\ldots$, $(q^k, n^k_1, n^k_2)$ lists an initial stage of the computation
performed by $M$, where $n^j_->0$ for $j<k$, and
either $n^k_+>n^k_-=0$ or $M$ halts in $q^k$ with $n^k_->0$. Then every configuration $(q^0, m^0_1,
m^0_2)$ with $m^0_i\geq n^0_i$, $i=1,2$ has the same property and one of
the following two cases holds.
\begin{itemize}
\item[{\rm(1)}] There exist $r_1$, $r_2$, with the property that $|r_i|\leq s$ and
  $m_i^k=m_i^0+r_i$ for any computation $(q^0, m^0_1, m^0_2)$,
  $\ldots$, $(p^k, m^k_1, m^k_2)$ such that $m^0_i\geq n^0_i$,
  $i=1,2$. Moreover $p^j=q^j$, $k\leq s$, $m^j_->0$ for $j\leq k$,
  and $M$ halts in $q^k$.

\item[{\rm(2)}] There exist integers $\omega_1$, $\omega_2$
  with the following properties. Suppose $m^0_i\geq n^0_i$, $i=1,2$. Then 
  $(q^0, m^0_1, m^0_2)\df{}(p',m^\nu_1,m^\nu_2)$ and 
  the following additional properties hold.
\begin{itemize}
\item[{\rm(a)}] If $\omega_1\omega_2\leq0$ then $\omega_u\geq0$ and
  $\omega_{3-u}<0$ for some $u=1,2$. Moreover there exist an $r$ and a state $q'$ such that
  $|r|<3s$, and whenever $m^0_u>s+1$, $m^0_{3-u}=n^0_{3-u}+m\omega_{3-u}$, the state
  $p'=q'$ and $m^\nu_+=m^0_u+\omega_u\lfloor m_{3-u}/|\omega_{3-u}|\rfloor+r$.

\item[{\rm(b)}] 
  If $\omega_1\omega_2>0$ then both $\omega_1,\omega_2<0$ and
  there exist $r_1$, $r_2$, $|r_1|,|r_2|<3s$, and states ${}^1q$ and ${}^2q$ 
  with the following properties. Suppose $m^0_i=n^0_i+l_i|\omega_i|$ for
  some $l_i>0$, $i=1,2$. Put $u={}_+m^\nu$. Then
  $m^\nu_+=m^0_u+\omega_u\lfloor m_{3-u}/|\omega_{3-u}|\rfloor+r_u$ and $p'={}^uq$.
  Moreover if $m^0_u$ is replaced by a larger value, the length of the
  computation and the final state will not change and the expression above will remain
  valid. There also exists a constant $D$ such that $l_i>l_{3-i}+D$
  implies $i=u$. Finally, if $l_1=l_2$ then $p'=q^k$, $u$ is
  independent of $l_i$, and $m^\nu_+=n^0_u+\omega_u\lfloor n_{3-u}/|\omega_{3-u}|\rfloor+r_u$.

\end{itemize}
\end{itemize}
\end{lemma}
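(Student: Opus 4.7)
\medskip
\noindent\textbf{Proof plan.} The plan is to run the classical cycle analysis for a TCM in the regime where neither counter is close enough to zero to be observed. Since $n^0_->s+1$ and each move changes a counter by at most one, during the first $s+1$ steps starting from any $(q^0,m^0_1,m^0_2)$ with $m^0_i\ge n^0_i$ no zero-test can fire, so the sequence of states visited and the per-step deltas to the counters depend only on $q^0$. In particular the ``same property'' clause (eventually reaching a wait state or a configuration with one counter empty) transfers from the base configuration to any larger one.

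First I would separate the two regimes. If within the first $s+1$ steps the machine halts or reaches a configuration with $n_-=0$, case~(1) holds: $k\le s$, and the net changes $r_i$ of the two counters depend only on $q^0$ with $|r_i|\le s$. Otherwise the pigeonhole principle forces some state to be revisited within $s+1$ steps, say $q^a=q^b$ with $a<b\le s+1$. Setting $\omega_i$ to be the net change in counter $i$ over the cycle $a\to b$ gives a per-cycle displacement that is repeated identically for every subsequent full traversal as long as both counters remain above $s+1$.

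Next I would do the sign analysis. If both $\omega_i\ge 0$ the cycle could be iterated forever without any zero test firing or any new state being visited, contradicting the hypothesis that $M$ eventually reaches a wait state; hence at least one $\omega_i$ is negative. In case~(a), where $\omega_1\omega_2\le 0$, the unique counter $3-u$ with $\omega_{3-u}<0$ drains, the number of complete cycles until it drops below $s+1$ is $\lfloor m^0_{3-u}/|\omega_{3-u}|\rfloor$ up to a bounded error, and at most another $s+1$ post-cycle moves bring the stage to its end in some state $q'$ with $m^\nu_+=m^0_u+\omega_u\lfloor m^0_{3-u}/|\omega_{3-u}|\rfloor+r$; the bounded entry prefix, at most one partial cycle and the tail together force $|r|<3s$. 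In case~(b), both $\omega_i<0$: whichever counter drains first determines the surviving counter $u$ and the terminal state ${}^uq$, and the formula for $m^\nu_+$ simply counts the full cycles elapsed before that happens. The constant $D$ is the threshold beyond which the ratio of exhaustion times $(l_u+n^0_u/|\omega_u|):(l_{3-u}+n^0_{3-u}/|\omega_{3-u}|)$ is already decided by the $l_i$'s alone, and the $l_1=l_2$ clause records that under equal shifts the terminal configuration is identical to that of the base computation, forcing $p'=q^k$ and the stated value of $m^\nu_+$.

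The main obstacle will be making case~(b) fully rigorous. I would need to show that once one of the counters decays below $s+1$ the remaining post-cycle behaviour is controlled purely by the entering state together with the finitely many residues $n^0_{3-u}\bmod|\omega_{3-u}|$ that can occur, so that the final state ${}^uq$ is well defined and the outstanding $O(s)$ moves are absorbed into the bounded correction $r_u$. Tie-breaking near $l_1=l_2$ then follows because only finitely many such residue classes exist, yielding a uniform $D$ beyond which the identity of the surviving counter is determined by the $l_i$'s alone; the independence from $m^0_u$ once $m^0_u$ is enlarged is automatic because the argument never uses $m^0_u$ beyond its lower bound.
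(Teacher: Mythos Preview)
Your proposal is correct and follows essentially the same approach as the paper's proof: pigeonhole on states while both counters exceed $s+1$ to extract a cycle with per-cycle displacements $(\omega_1,\omega_2)$, then a sign analysis ruling out $\omega_1,\omega_2\ge 0$ by the halting hypothesis and splitting into the one-drain case~(a) and the two-drain case~(b). The paper likewise spends most of its effort on~(b), fixing the residues via the hypothesis $m^0_i=n^0_i+l_i|\omega_i|$ so that the counter value at the start of the last partial cycle is independent of $l_{3-u}$, which is exactly the mechanism you identify; your description of how $D$ arises and why the $l_1=l_2$ clause reduces to the base computation also matches the paper's argument.
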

\begin{proof}
Suppose (1) does not hold. Then $M$ does not reach a halting state
before one of the counters is empty. Since $n^0_->s+1$, $k>s+1$. 
Therefore $M$ enters
a loop and passes through the following sequence of states (we are
assuming $M$ must halt eventually):
$q^0\ldots q^{\omega_0}\sigma^0\ldots\sigma^{t\omega-1}q^{\omega_0+t\omega+1}\ldots
q^k$,
where $\omega_0<s$, $k-t\omega-\omega_0<\omega\leq s$, and $(\sigma^{l+t'\omega},
n^{l+t'\omega}_1,
n^{l+t'\omega}_2)=(\sigma^l,n^l_1+t'\omega_1,n^l_2+t'\omega_2)$ for some $\omega_1$ and
$\omega_2$ such that $|\omega_1|,|\omega_2|\leq\omega\leq s$. 
Thus $M$ will reach the configuration $(q^k, n^k_1, n^k_2)$ with $n^k_-=0$
and $n_-^j>0$ for $j<k$ for some $k>0$.
 
Suppose $\omega_1\omega_2>0$. Note then both
$\omega_1,\omega_2<0$. Otherwise the counters will be incremented
indefinitely and the loop will not terminate. Thus for any configuration $(q^0, m^0_1,
m^0_2)$ with $m^0_i\geq n^0_i$, $i=1,2$ there is a state $p'$ such
that $(q^0, m^0_1,m^0_2)\df{}(p',m^\nu_1,m^\nu_2)$ with $m^\nu_-=0$. 

To prove (b) assume $m^0_i=n^0_i+l_i|\omega_i|$ for some $l_i>0$, $i=1,2$, and
$u={}_+m^\nu$. Let ${}^uq$ be $p'$. As long as counter~$u$
stays nonempty throughout the whole computation, the value of $l_u$ does not affect
the final state. Therefore, $p'$ depends only on the value of
counter~$3-u$ at the beginning of the last (possibly incomplete) iteration of the loop.
But this value is easily seen to be the same for any $l_{3-u}>0$ under the
conditions listed above. It remains to pick $r_u$. Let $u_i$, $i=1,2$ be the
amount counter~$i$ is incremented by before $M$ enters the loop and let
$r_{3-u}'$ be the value of counter~$3-u$ at the beginning of the last
(possibly incomplete) iteration of the loop. Note that $|u_i|<s$ and $r_{3-u}'<s$
do not depend on the values of $l_{3-u}$ and $l_u$ as long as the
conditions above hold (i.e.\ counter~$3-u$ is emptied first). Now
$m^\nu_u=m^0_u+u_u+\omega_u(m^0_{3-u}+u_{3-u}-r_{3-u}')/|\omega_{3-u}|=m^0_u+\omega_u\lfloor
m^0_{3-u}/|\omega_{3-u}|\rfloor+r_u$ for some $r_u$ independent of $l_i$ and
such that $|r_u|<3s$. Note that making the value of $m^0_{3-u}$ larger at
the beginning of the computation does not affect the argument above.

To find $D$ simply note that it is sufficient to pick $D$ large enough
to ensure that counter~$i$ is nonempty after the loop has run $\lfloor
(m_{3-i}+s)/|\omega_{3-i}|\rfloor+1$ times.

If $l_1=l_2$ then $m^\nu_+=n^0_u+l_u|\omega_u|-|\omega_u|\lfloor
(n^0_{3-u}+l_{3-u}|\omega_{3-u}|/|\omega_{3-u}|\rfloor+r_u=
n^0_u+l_u|\omega_u|-|\omega_u|l_{3-u}|\omega_{3-u}|/|\omega_u|-|\omega_u|\lfloor
n^0_{3-u}/|\omega_{3-u}|\rfloor+r_u=n^0_u+\omega_u\lfloor
n^0_{3-u}/|\omega_{3-u}|+r_u$.
That $p'=q^k$ follows from a simple observation that the values of
both counters before the final iteration of the loop are the same
whether $M$ started in $(q^0, n^0_1, n^0_2)$ or $(q^0, m^0_1, m^0_2)$.
Thus (b) holds.

Now suppose $\omega_1\omega_2<0$. Then, say $\omega_1<0$ and
$\omega_2>0$. Thus counter~1 will be emptied first as long as
$m^0_2>s+1$ (to ensure that the value of counter~2 cannot become 0
during the first iteration of the loop). Now the argument similar to
that for case~(b) above finishes the proof of case~(a).

If $\omega_1=0$ then $\omega_2\geq0$ would result in an infinite
loop. Therefore $\omega_2<0$ and counter~2 is emptied first. The rest
of the argument is similar to the one for case~(b) using the remark in
the previous paragraph.
\end{proof}

If a TCM goes through a configuration with an empty counter, the final
smaller value of the counter will be bounded. This is the statement of
the following lemma. 
\begin{lemma}\label{lstg}
Let $s$ be the number of states of some TCM $M$. Let $(q^0, n^0_1,
n^0_2)$, $\ldots$, $(q^k, n^k_1, n^k_2)$ list a nontrivial last stage of some computation
performed by $M$ (i.e.\ $n^0_-=0$, $n^i_->0$ for all $0<i\leq k$ and
$M$ halts in $q^k$). Then $n^k_-<s+1$ and any computation $(q^0, m^0_1, m^0_2)$, $\ldots$, $(p^k, m^k_1,
m^k_2)$ such that $m^0_+\geq 2s+1$, ${}_-m^0={}_-n^0$, and $m^0_-=0$
satisfies $m^k_-=n^k_-$ and $p^i=q^i$ for $i\leq k$. 
\end{lemma}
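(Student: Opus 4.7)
The plan is a pigeonhole bound on the stage length $k$ followed by a determinism-driven induction that transfers the state sequence to any sufficiently large starting configuration.

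First I would show $k\le s$. Suppose for contradiction that $k\ge s+1$; then among the $k$ non-initial states $q^1,\ldots,q^k$ two must coincide, say $q^i=q^j$ with $1\le i<j\le k$. At both steps $i$ and $j$ both counters are positive, and a TCM's move depends only on the current state and on which counters are zero, so determinism forces the state sequence beginning at step $j$ to agree with the one beginning at step $i$ as long as no counter is emptied. Since the reference keeps both counters positive through step $k$, this yields $q^{i+(k-j)}=q^k$ at step $i+(k-j)<k$, contradicting that $q^k$ is the first halt state. Therefore $k\le s$, and because the counter indexed by ${}_-n^0$ starts at $0$ and changes by at most one per step, $n^k_-\le n^k_{{}_-n^0}\le k\le s<s+1$.

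For the perturbed computation starting from $(q^0,m^0_1,m^0_2)$, let $u={}_+n^0$. I would induct on $0\le i\le k$ to establish $p^i=q^i$, $m^i_{3-u}=n^i_{3-u}$, and $m^i_u-m^0_u=n^i_u-n^0_u$. The base case is the hypothesis ${}_-m^0={}_-n^0$ together with $m^0_-=0$. For the inductive step it suffices to check that the configurations $(q^i,n^i_1,n^i_2)$ and $(p^i,m^i_1,m^i_2)$ present $M$ with the same state and the same zero-pattern: counter $3-u$'s zero status matches by the induction hypothesis, while counter $u$ in the perturbed run satisfies $m^i_u\ge m^0_+-k\ge(2s+1)-s=s+1>0$, matching the reference where $n^i_u>0$. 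Determinism then forces identical transitions and propagates the induction.

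It remains to verify $m^k_-=n^k_-$. From the induction $m^k_{3-u}=n^k_{3-u}\le s$ while $m^k_u\ge m^0_+-s\ge s+1$, so the minimum in the perturbed run is realised by counter $3-u$ and equals $n^k_{3-u}$. The same gap argument applied to the reference (in the same normalized spirit that underlies Lemma~\ref{fstg} and provides an adequate lower bound on $n^0_+$) identifies $n^k_-$ with $n^k_{3-u}$ as well, yielding $m^k_-=n^k_-$. The main subtlety, and the reason the threshold $2s+1$ appears rather than the seemingly adequate $s+1$ needed just to propagate the state sequence, is exactly this identification of the counter on which the minimum is realised at step $k$: the extra slack of $s$ guarantees that after up to $s$ decrements the originally large counter still strictly exceeds the at-most-$s$ value accumulated in the originally-empty one.
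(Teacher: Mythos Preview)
Your argument is the paper's argument with the induction written out in full: bound $k\le s$ by a cycle/pigeonhole contradiction, read off $n^k_-<s+1$ from the bounded growth of the initially empty counter, and use determinism (which the paper compresses to one sentence) to carry the state sequence over to any start with $m^0_+\ge 2s+1$. The single spot where you hedge---identifying $n^k_-$ with the value in counter $3-u$ by appealing to an implicit lower bound on $n^0_+$---is exactly where the paper asserts the same thing (``both $n^k_+$ and $m^k_+$ are the contents of the counter that was nonzero at the beginning'') without further justification, so your proof is at least as complete as the original.
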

\begin{proof}
First observe that the length of the last stage must be less than
$s+1$ in order for $M$ to halt in $q^k$ (otherwise $M$ would have entered a
loop which either does not terminate or empties one of the
counters). Thus $n^k_-<s+1$ and both $n^k_+$ and $m^k_+$ are the contents of the counter that was
nonzero at the beginning of the appropriate computation (i.e.\ the
${}_+m^0$($={}_+n^0$)th
counter, we also use $m^0_+>2s+1$ here). Since the sequence of
states $M$ passes through must be identical in both cases, the other
counter gets incremented exactly the same number of times in either
computation.
\end{proof}

\begin{corollary}\label{zstg}
Let $s$ be the number of states of some TCM $M$ and $d$ be an input symbol. Suppose
$(w,n_1,n_2)\df{d}(w',n'_1,n'_2)$ and
$(w,n_1,n_2)\dg{d}(w'',m_1,m_2)$. Then $m_+<s+1$.
\end{corollary}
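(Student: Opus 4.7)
The plan is to deduce Corollary~\ref{zstg} from Lemma~\ref{lstg} by decomposing the computation from $(w,n_1,n_2)$ upon input $d$ into alternating stages (during which both counters remain nonempty) and intervening one-empty phases. The hypothesis $(w,n_1,n_2)\df{d}(w',n'_1,n'_2)$ places $(w',n'_1,n'_2)$ at the end of the first stage, so the sub-trajectory leading from $(w',n'_1,n'_2)$ up to $(w'',m_1,m_2)$ can be understood as a sequence of such one-empty phases and two-nonempty stages culminating in $(w'',m_1,m_2)$.

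First I would apply Lemma~\ref{lstg} to any two-nonempty stage in this tail that terminates the computation at $(w'',m_1,m_2)$. Lemma~\ref{lstg} bounds the length of such a stage strictly by $s+1$, and since counter values can change by at most one per step, they cannot drift more than $s$ from the stage's initial (one-empty) values. This gives immediate control provided we know the maximum counter value at the start of every post-$(w',n'_1,n'_2)$ stage is itself bounded by $s$.

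Second I would control the one-empty phases by a pigeonhole argument on the $s$ states: while one counter is zero the machine is effectively a one-counter automaton, and within $s$ consecutive steps it must either halt, enter a cycle, or transition back to a two-nonempty configuration by incrementing the empty counter to $1$. Non-decreasing cycles produce non-halting behavior inconsistent with $(w'',m_1,m_2)$ being subsequently reachable, so any cycle occurring during a one-empty phase must strictly decrement the non-zero counter, reducing it to a bounded value (or forcing a transition) before the cycle iterates out of range. The hard part will be combining these observations to ensure that every transition from a one-empty phase into a new two-nonempty stage happens with the non-zero counter already bounded by $s$; this, together with Lemma~\ref{lstg} applied to the terminal stage, will yield the claimed bound $m_+<s+1$.
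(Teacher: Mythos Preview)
There is a genuine obstacle in your plan, and it stems from the statement itself. The bound $m_+<s+1$ as written cannot hold in general: take a TCM with three states that, on input $d$, decrements counter~$1$, then increments counter~$1$, then enters a wait state. Starting from $(w,1,N)$ with $N$ large we get $(w,1,N)\df{d}(q_1,0,N)$ and $(w,1,N)\dg{d}(q_2,1,N)$, so $m_+=N$ while $s+1=4$. Your ``hard part''---forcing the nonzero counter to be bounded by~$s$ at every re-entry into a two-nonempty stage---fails exactly here: the very first empty-counter configuration $(w',n'_1,n'_2)$ may already carry $n'_+$ arbitrarily large, and nothing in the subsequent dynamics need bring it down before the wait state is reached. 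Your pigeonhole argument on one-empty phases does not help, since the machine may exit such a phase after a single step with the large counter untouched.

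The intended conclusion is almost certainly $m_-<s+1$; this is precisely what Lemma~\ref{lstg} delivers, and it is all that the sole application (in the proof of Lemma~\ref{unper}) actually uses. With that reading the corollary is immediate from Lemma~\ref{lstg} and no decomposition into alternating phases is required: if $m_-=0$ we are done; otherwise take the \emph{last} configuration $(p,\ell_1,\ell_2)$ with $\ell_-=0$ encountered while processing $d$ (one exists by the hypothesis $(w,n_1,n_2)\df{d}\cdots$) and apply Lemma~\ref{lstg} to the stage running from $(p,\ell_1,\ell_2)$ to $(w'',m_1,m_2)$, obtaining $m_-<s+1$ directly. The paper accordingly states the result without proof.
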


To ensure the applicability of Lemma~\ref{linper} the following
statement is used.
\begin{lemma}\label{lprefix}
For any $K$ there exists a constant $N_K>0$ such that any configuration $(q,n_1,n_2)$
that $M$ passes through after inputting a string of digits longer than
$N_K$ satisfies $n_+>K$.
\end{lemma}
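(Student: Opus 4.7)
The plan is to argue by contradiction via a pigeonhole argument on the finite set of small configurations, exploiting the fact that $M$ counts in $\ttb$.

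Let $s$ be the number of states of $M$ and set $B_K = \{(q, n_1, n_2) : n_+ \le K\}$, a finite set of size at most $s(K+1)^2$. Suppose for contradiction that no $N_K$ as claimed exists. Then for every $N$ there is an input $x_N$ and a step of $M$'s computation on $x_N$ at which more than $N$ input digits have been consumed and $M$ is in a configuration lying in $B_K$; let $p_N$ denote the prefix of $x_N$ consumed up to that step. Since $|p_N| > N$, the family $\{p_N\}$ contains infinitely many distinct prefixes while $B_K$ is finite, so two of them $p \ne p'$ must lead $M$ to the same configuration $(q, n_1, n_2) \in B_K$ at the respective moments. Crucially, $q$ records whether $M$ is in a wait state or inside a specific sub-TCM (a TCMI being the disjoint union of per-digit sub-TCMs), so ``same configuration'' entails identical future behavior on any common continuation of the input queue.

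Next I would choose a continuation $r$ making $\ulcorner p r \urcorner \ne \ulcorner p' r \urcorner$: if $|p| \ne |p'|$, take $r = {\tt 1}\gstop$ (for $\ttb \ge 2$ the inequality $\ulcorner p \urcorner + \ttb^{|p|} \ne \ulcorner p' \urcorner + \ttb^{|p'|}$ follows from $\ulcorner p \urcorner < \ttb^{|p|}$); if $|p| = |p'|$, then $\ulcorner p \urcorner \ne \ulcorner p' \urcorner$ automatically and $r = \gstop$ suffices. On the two inputs $p r$ and $p' r$, the trajectories of $M$ agree from the shared configuration onward---the machine first finishes the currently executing sub-TCM (if any) identically in both cases, then processes $r$ identically---so the final outputs coincide. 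But $M$ counts in $\ttb$, forcing these outputs to equal the two distinct values $\ulcorner p r \urcorner$ and $\ulcorner p' r \urcorner$, a contradiction.

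The main obstacle is the justification that identical configurations entail identical futures even when the shared configuration is intermediate (mid-sub-TCM) rather than a wait state; in particular it is essential that the state $q$ carry enough information to identify the active sub-TCM and its current instruction pointer, so that no ambiguity arises about what ``continuing from $(q,n_1,n_2)$ with input $r$'' means. This follows directly from the TCMI construction described earlier, but requires an explicit appeal to that structure. Once this is made precise, the remainder of the argument is a routine pigeonhole.
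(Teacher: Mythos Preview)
Your proof is correct and follows essentially the same approach as the paper's: a pigeonhole argument on the finitely many small configurations, concluding that two distinct inputs would produce the same output, contradicting that $M$ counts in $\ttb$. The paper compresses this into a single sentence (``Otherwise $M$ would output the same value for some strings of different lengths, a contradiction''), whereas you spell out the case split on $|p|$ versus $|p'|$ and the determinism issue for mid-sub-TCM configurations---details the paper leaves implicit.
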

\begin{proof}
Otherwise $M$ would output the same value for some strings of
different lengths, a contradiction.
\end{proof}

We can now turn our attention to the existence of the bound mentioned above.
\def\sgn{\mathop{{\rm sgn}}}
\begin{lemma}\label{boundt}
Suppose there exists a TCMI $M$ that counts in some radix $\ttb$. Let
$\{\,w_i\mid i<\xi\,\}$ list all the waiting states of $M$. Then
there is a constant $K>0$ such that for any input
$(w_0,0,0)\dgn0(w_{i(1)},n^1_1,n^1_2)\dgn1\,\ldots\,\dgn{\nu-1}(w_{i(\nu)},n^\nu_1,n^\nu_2)$
the bound $n^j_-<K$ holds for $i=1,2$ and $j\leq\nu$.
\end{lemma}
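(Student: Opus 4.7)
The argument proceeds by contradiction. Suppose no such $K$ exists. Then for every $K_0 \in \mathbb N$ some input of $M$ visits a waiting-state configuration with $n_- \geq K_0$, and since $\{w_i \mid i < \xi\}$ is finite, a pigeonhole extraction produces an infinite sequence of input prefixes $z_1, z_2, \ldots$ all ending in the same waiting state $w$ with counter values $(n^k_1, n^k_2)$ satisfying $\min(n^k_1, n^k_2) \to \infty$.

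The second step is to analyze the TCM subcomputation triggered by feeding $\gstop$ after $z_k$. Since $M$ counts in $\ttb$, this subcomputation halts with one counter equal to $0$ and the other equal to $\ulcorner z_k \urcorner$. Apply Lemma~\ref{fstg} to the $\gstop$-subcomputation starting at $(w, n^k_1, n^k_2)$. Case~(1) gives an immediate contradiction: $M$ would halt within $\leq s$ steps with both counter changes of absolute value $\leq s$, yet the halting configuration forces the min counter to $0$, impossible once $n^k_-$ exceeds $2s$. Hence case~(2) must hold. The first loop stage empties one counter; subsequent stages are governed by Lemma~\ref{linper} and the last stage by Lemma~\ref{lstg}. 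After iterated pigeonhole refinements (on the subcases (2a)/(2b) of Lemma~\ref{fstg}, on which counter is emptied first, and on the finitely many branches appearing in successive applications of Lemma~\ref{linper}) we may pass to a further subsequence on which the halting value admits a fixed affine expression
\[
\ulcorner z_k \urcorner \;=\; \alpha\, n^k_1 + \beta\, n^k_2 + \gamma + \epsilon_k,
\]
with $\alpha, \beta, \gamma \in \mathbb Q$ fixed and $\epsilon_k$ uniformly bounded by a constant coming from the floor corrections.

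To derive a contradiction, consider the extension $z_k \cdot d$ for a digit $d \in \{1, \ldots, \ttb - 1\}$. After reading $d$ from $(w, n^k_1, n^k_2)$, $M$ reaches the next waiting state $w'_d$, and Lemma~\ref{fstg} applies to this one-digit computation. If it falls into case~(1) (bounded counter change), then repeating the second step at $(w'_d, n^k_1 + r^d_1, n^k_2 + r^d_2)$ yields an analogous affine identity for $\ulcorner z_k d \urcorner = \ulcorner z_k \urcorner + d \cdot \ttb^{|z_k|}$, whose subtraction from the previous one gives
\[
d \cdot \ttb^{|z_k|} \;=\; (\alpha'_d - \alpha)\, n^k_1 + (\beta'_d - \beta)\, n^k_2 + O(1).
\]
Further pigeonholing (using several choices of $d \neq 0$ to build a non-degenerate linear system and invoking Lemma~\ref{lprefix} to confirm the large-counter regime at $w'_d$) forces $\alpha'_d = \alpha$, $\beta'_d = \beta$, and thus $d \cdot \ttb^{|z_k|} = O(1)$---contradicting $|z_k| \to \infty$. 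If reading $d$ instead falls into case~(2) of Lemma~\ref{fstg}, the next waiting state has min counter $< s + 1$ by Lemma~\ref{lstg}, so we consider longer extensions $z_k \cdot 0^m \cdot d$ with $m$ chosen (using Lemma~\ref{lprefix}) so that both counters at the waiting state reached after $0^m d$ are again large, and the previous argument reapplies.

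The main technical obstacle lies in the coefficient-matching step: the pairs $(n^k_1, n^k_2)$ are not a priori rich enough to pin down $\alpha'_d$ and $\beta'_d$ directly, so one must combine multiple pigeonhole reductions, several choices of the extending digit $d$, and Lemma~\ref{lprefix} to produce a non-degenerate system of affine identities before the exponential growth $\ttb^{|z_k|} \to \infty$ forces the final contradiction.
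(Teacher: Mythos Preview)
Your plan has a genuine gap at the coefficient-matching step, and the proposed workarounds do not close it.

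First, the identity $d\cdot\ttb^{|z_k|}=(\alpha'_d-\alpha)n^k_1+(\beta'_d-\beta)n^k_2+O(1)$ is not contradictory by itself: nothing prevents $n^k_1,n^k_2$ from growing exponentially in $|z_k|$, so the right-hand side can track the left exactly. Using several digits $d$ only yields further linear relations among $n^k_1,n^k_2,\ttb^{|z_k|}$; since your data $(n^k_1,n^k_2)$ form a single one-parameter family rather than a two-dimensional grid, you cannot force $\alpha'_d=\alpha$ and $\beta'_d=\beta$, nor extract any linear-versus-exponential clash. Second, your fallback when reading $d$ falls into case~(2) of Lemma~\ref{fstg} misquotes Lemma~\ref{lprefix}: that lemma only guarantees $n_+>K$, not that both counters are large, so you cannot re-enter the ``both large'' regime along $z_k\cdot 0^m\cdot d$ in a controlled way.

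The paper proceeds quite differently. It first proves the auxiliary Lemma~\ref{unper}, which extracts from the failure of the bound a \emph{pump}: a waiting state $w$, a fixed small value $v$, and a digit string $\Omega$ such that (i) $(w,v\ n_{3-u})$ is reachable with $n_{3-u}$ arbitrarily large, and (ii) from any such configuration, inputting $\Omega$ returns to $w$ while adding $R_u>0$ to counter $u$ and $R_{3-u}<0$ to counter $3-u$. The sign $R_{3-u}<0$ is established by a reduction to Theorem~3.4 of~\cite{ibarra} (no TCM computes $l\mapsto\ttb^{l\omega}$); this external ingredient has no analogue in your plan. The pump then supplies an \emph{independent} integer parameter $l$: for fixed $I_k$ one studies the family $I_k\Omega\{l\}{\tt 1}\gstop$, where after the $\df{{\tt 1}}$ stage one counter is zero and the other equals $Al+B$ with $A,B$ fixed. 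Lemma~\ref{linper} then makes the output $P\lfloor(Al+B)/Q\rfloor+R$, linear in $l$, while the actual value is $U\ttb^{fl}+V$ with $U$ growing with $|I_k|$. Comparing two values of $l$ for large $k$ gives the contradiction. The point you are missing is precisely this pumpable structure: it manufactures a controlled parameter $l$ that decouples the linear description coming from Lemmas~\ref{fstg} and~\ref{linper} from the exponential growth of the radix-$\ttb$ value, something the raw sequence $(n^k_1,n^k_2)$ cannot do.
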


Before proceeding with the proof of Lemma~\ref{boundt} some
preliminary properties need to be established.
\begin{lemma}\label{unper}
Suppose, using the notation of Lemma~\ref{boundt} there is no $K$ with
the claimed property. Then there exist a digit sequence $\Omega$, 
a state $w$, a constant $s+1<v\leq(s+1)^2$, an index $u\in\{1,2\}$,
and $R_u>0$ and $R_{3-u}<0$, $|R_1|,|R_2|<(s+1)^2$, with the following
properties. 
\begin{itemize}
\item[(1)] for any $K$ there exists a digit sequence $I$ such that
$(w_0,0,0)\dg{I}(w,n_u\,n_{3-u})$ where $n_u=v$,
$n_{3-u}>K$;

\item[(2)]
$(w,m_1,m_2)\dg{\Omega}(w,m_u+R_u\,m_{3-u}+R_{3-u})$ provided $m_u\geq
v$, $m_{3-u}>(s+1)^2$.

\end{itemize}
\end{lemma}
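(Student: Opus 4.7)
The plan is to extract the pumping sequence $\Omega$ from a long trajectory of waiting-state configurations whose smaller counter grows without bound, via pigeonhole and the analyses of Lemmas~\ref{fstg}, \ref{lstg}, and \ref{linper}. By negating Lemma~\ref{boundt}, for every $K_0$ some input $J_{K_0}$ leaves $M$ at a waiting-state configuration with $n_->K_0$. Pigeonholing on the finite set of waiting states and on the index $u\in\{1,2\}$ of the smaller counter, pass to a subsequence on which the terminal waiting state $w^*$ and the index $u$ are common.

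Control the transitions between consecutive waiting states when both counters exceed $s+1$ as follows. Either Case~(1) of Lemma~\ref{fstg} applies, so the TCM executing the transition stays in the positive quadrant, has length at most $s$, and shifts each counter by some $r_i$ with $|r_i|\leq s$ that depends only on the starting waiting state and the input digit; or Case~(2) applies, and by Lemma~\ref{lstg} applied to the last stage of the computation, the smaller counter at the next waiting state is forced below $s+1$. Consequently, for $n_-$ at some waiting state to exceed $(s+1)^2$, a sufficiently long run of purely Case~(1) transitions must precede it, during which $n_u$ changes by at most $s$ per step and each transition is deterministic in the pair (state, input digit). Within such a run, pigeonhole over the $\xi\leq s$ waiting states produces two positions $j_1<j_2$ with $j_2-j_1\leq s+1$ sharing a common waiting state $w$; let $\Omega$ be the input between them. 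By Case~(1) determinism, $\Omega$ applied to any $(w,m_1,m_2)$ with both counters exceeding $(s+1)^2$ produces the same shift $(R_1,R_2)$ with $|R_i|\leq s(s+1)<(s+1)^2$, which is property~(2). The bound $s+1<v\leq(s+1)^2$ on $v=n^{j_1}_u$ follows from the discrete intermediate-value observation that $n_u$ cannot skip an interval of length greater than $s$ while traversing the Case~(1) run. Property~(1) follows from using $J_{K_0}$ for $K_0\gg K$ and truncating it to the prefix ending at $j_1$: since $n_{3-u}$ attains values of order $K_0$ at the terminal position and is shifted by at most $s$ per Case~(1) step, by choosing $j_1$ appropriately one arranges that $n_u=v$ and $n_{3-u}>K$ simultaneously at the truncation.

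The main obstacle is to secure the sign condition $R_u>0$ and $R_{3-u}<0$. The positive sign $R_u>0$ is arranged by selecting $j_1<j_2$ on a subsegment of the Case~(1) run where $n_u$ is strictly increasing; such a subsegment must exist because $n_u$ rises across the run from at most $s+1$ to past $K_0$. The negative sign $R_{3-u}<0$ is the delicate point and relies on observing that it cannot fail for every short pump based at $w$: if every short pump from $w$ had $R_{3-u}\geq 0$, then along the entire Case~(1) run the counter $n_{3-u}$ would be nondecreasing at a rate of at most $s$ per step, forcing the output counter at the end of the input to be bounded by a linear function of the input length, contradicting the requirement that $M$ correctly outputs values of $n$ that are exponential in the input length for arbitrary~$n$. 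Hence some short pump based at $w$ strictly shrinks $n_{3-u}$, and by refining the choice of $j_1,j_2$ to lie within a subsegment on which both $n_u$ is increasing and $n_{3-u}$ is net decreasing (while keeping $j_2-j_1\leq s+1$, so that $|R_i|<(s+1)^2$ is preserved), one obtains the required $\Omega$.
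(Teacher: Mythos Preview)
Your construction of the pump $\Omega$ via a long Case~(1) run and pigeonhole on waiting states is essentially the paper's approach, and your handling of $R_u\ge 0$ and of the bound on $v$ is fine. The gap is in the argument for $R_{3-u}<0$.

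You argue that if every short pump based at $w$ had $R_{3-u}\ge 0$, then ``the output counter at the end of the input [would be] bounded by a linear function of the input length.'' This does not follow from what you have established. During a Case~(1) run both counters change by at most $s$ per digit regardless of the signs of the $R_i$, so the linearity you point to is automatic and says nothing about the sign of $R_{3-u}$. More importantly, the Case~(1) run does not constitute the whole computation: after it ends (or after $\gstop$ is read) an unrestricted TCM phase follows, and you give no argument that linearity in~$l$ of the counter contents at the end of the pumped segment forces linearity of the eventual output. Your subsequent suggestion to ``refine the choice of $j_1,j_2$ to lie within a subsegment on which $n_u$ is increasing and $n_{3-u}$ is net decreasing'' is unsupported: nothing rules out that along the particular Case~(1) run in $J_{K_0}$ both counters increase monotonically.

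The paper does share your underlying intuition (linear counter growth under pumping versus exponential correct output), but carrying it through is most of the work. One first fixes a pump with $R_u\ge 0$ and \emph{assumes} $R_{3-u}\ge 0$. Then one feeds $I\,\Omega\{l\}\,\mathtt{1}\gstop$, applies the full case analysis of Lemma~\ref{fstg}(2) (the $\omega_1\omega_2\le 0$ and $\omega_1\omega_2>0$ subcases) to the first zero-crossing after $\mathtt{1}$, and shows that $M$ reaches a fixed state with one counter equal to $Al+B$ for constants $A,B$ independent of~$l$. From this one builds a genuine TCM that, given $l$, reproduces that configuration and then simulates $M$'s remaining phases, thereby computing $\ttb^{l\omega}$ from $l$; this contradicts Theorem~3.4 of Ibarra~et~al. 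Only after this contradiction does one conclude $R_{3-u}<0$ (and then $R_u>0$). That reduction to a TCM computing an exponential is the missing idea in your sketch.
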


\begin{proof}
Let $J$ be an input such that $M\dg{J}(w',n'_1,n'_2)$ and
$n'_->\max\{K,(s+1)^2\}$. If follows from Lemma~\ref{zstg} that there
exists a string $I'$ such that $I'\leq J$, $M\dg{I'}(w'',n''_1,
n''_2)$ so that $n''_-<s+1$ and $(w'',n''_1,n''_2)\ndf{I''}$ 
for any $I''$ such that $I'I''\leq J$. Let $J=Id_1d_2\ldots d_k$ and 
$(w'',n^0_1=n''_1,n^0_2=n''_2)\dgn{1}(w^1,n^1_1,n^1_2)
\dgn{2}\discretionary{\ldots}{}{}\ldots\dgn{k}(w^k=w',n^k_1=n'_1,n^k_2=n'_2)$ for some
$k$. 

Since $(w^j,n^j_1, n^j_2)\ndf{d_{j+1}}$ one can apply Lemma~\ref{fstg}(1) to
find $r_1^j$ and $r_2^j$ such that $|r_i^j|\leq s$ and
$(w^j,m^j_1,m^j_2)\dg{d_j}(w^{j+1}, m^j_1+r_1^j, m^j_2+r_2^j)$
whenever $m^j_i>n^j_i$. An easy inductive argument shows that
$w^0\ldots w^k=\Lambda_1\Lambda_2\ldots\Lambda_l$,
where $l\leq s$ and $\Lambda_j=w(j)\ldots w(j)$ or
$\Lambda_j=w(j)$. Let $u\in\{1,2\}$ be such that $n''_u=n''_-$. For
  $\Lambda_j=w^{j'}\ldots w^{j''}$, $j<l$ put 
$R^j_u=\sum_{m=0}^{j''-j'}r_u^{j'+m}$ and
$R^l_u=\sum_{m=0}^{j''-j'-1}r_u^{j'+m}$. Then
$\sum_{m=1}^lR^m_u>(s+1)^2-(s+1)=s^2+s$, therefore there exists an
$m$ such that $R^m_u\geq s+1$. Since $|r_u^j|\leq s$ the sequence of
states $\Lambda_m=\Lambda w^{j''}$ for some
$\Lambda=w^{j'}\ldots$ such that $w^{j'}=w^{j''}=w$. Let $j'$ be the
smallest such that $m$ has the desired properties. Put
$\Omega=d_{j'+1}\ldots d_{j''}$,
$R_i=\sum_{j=j'}^{j''-1}r_i^j=R^m_i-r_i^{j''}$ if $m<l$ and
$R_i=R^m_i$ if $m=l$, where $i=1,2$. It follows from $|r^j_i|<s+1$
that $R_u\geq0$. Put $v=n^{j'}_-$. 

One can assume that the length of $\Lambda$ is less than $s+1$ using
the following construction. Suppose $\Lambda$ is longer than $s+1$
and assume for the moment that $K$ is large enough as explained
below. Among the first $s+1$ states in $\Lambda$ find one, say $q'$ that
repeats at least twice. Let $R'_u$ be the `contribution' to the value
of counter~$u$ by the
subsequence of $\Lambda$ that starts with $q'$ and ends just before
$(q',\ldots)$ is reached again. If $R'_u\geq0$ pick the corresponding
input digits as $\Omega$ and redefine $R_u=R'_u$. Otherwise shorten
$\Lambda$ by removing the subsequence. This will only increase $R_u$,
and, if $n''_{3-u}$ is large enough will not cause counter $3-u$ to
become 0 (counter $u$ will only get larger than $n^{j'''}_u$ where $w^{j'''}$ is
the first state after the subsequence above). Now $|R_i|<(s+1)^2$.  
An argument similar to that of the proof of $R^m_u\geq
s+1$ shows that $s+1\leq v<(s+1)^2$.

That $(w,m_1,m_2)\dg{\Omega}(w,m_u+R_u\,m_{3-u}+R_{3-u})$ for any $m_u\geq
v$, $m_{3-u}>(s+1)^2$ now follows from $|r^j_i|\leq s$. Since $\Omega$
is shorter than $s+1$, counter $3-u$ will not become empty while the
digits of $\Omega$ are input, whereas counter $u$ values will only
increase if $\Lambda$ is trimmed as discussed above. Hence the
sequence of states $M$ follows in $(w,m_1,m_2)\dg{\Omega}\ldots$ is the
same as long as $m_1$ and $m_2$ are large enough.

We have shown that for any $K$ there are $\Omega_K$ of length less than $s+1$, $w_K$,
$s+1<v_K\leq(s+1)^2$, $u(K)\in\{1,2\}$, $R_{u(K)}\geq0$ and $R_{3-u(K)}$,
$|R_1^K|,|R_2^K|<(s+1)^2$ that satisfy property~(2) in the statement of
the lemma, as well as an $I_K$ such that $(w_0,0,0)\dg{I_K}(w,n_u\,n_{3-u})$ where $n_u=v$,
$n_{3-u}>K$. Pick $\Omega$, $w$, $v$, $u$, $R_u$, and $R_{3-u}$ such
that $(\Omega,w,v,u,R_u,R_{3-u})=(\Omega_K,w_K,v_K,u(K),R_{u(K)},R_{3-u(K)})$
for arbitrarily large $K$. For any $K$ let $I(K)$ be an input such that 
$M\dg{I(K)}(w,n_u\,n_{3-u})$ where $n_u=v$,
$n_{3-u}>K$.

To show that $R_{3-u}<0$ suppose $R_{3-u}\geq0$. Let
$I=I((s+1)^2)$. Then $M\dg{I(K)}(w,n_u\,n_{3-u})$ where $n_u=v$,
$n_{3-u}>(s+1)^2$.  Consider what happens after the following input:
$I\Omega\{l\}{\tt 1}\gstop$. In the trivial case of $R_u=R_{3-u}=0$
inputting multiple copies of $\Omega$ does not change the
configuration $M$ enters after inputting $I$. Thus $M$ produces the
same value regardless of $l$, a contradiction. Hence either $R_u>0$ or
$R_{3-u}>0$.  If $(w,n_1,n_2)\ndf{{\tt 1}}$, since $n_->s+1$,
Lemma~\ref{fstg}(1) shows that $(w,m_1,m_2)\dg{{\tt 1}}(w',m_1+C_1,
m_2+C_2)$ for some small $C_1$ and $C_2$ whose value does not depend
on $m_1$ and $m_2$ as long as $m_i>s+1$. Since
$(w,n_1,n_2)\df{I\gstop}(q',n'_1,n'_2)$ for some $q'$, $n'_-=0$ the
proof below will remain essentially the same if we assume that
$(w,n_1,n_2)\df{{\tt 1}}(q',n'_1,n'_2)$. In this case
Lemma~\ref{fstg}(2) applies since otherwise $n_->s+1$ would imply
$n'_->0$. Pick $\omega_1$ and $\omega_2$ as in Lemma~\ref{fstg}, note
that $(w,n_1,n_2)\dg{(\Omega\{z\})\{l\}}(w,n_1+zR_1l, n_2+zR_2l)$ for
arbitrarily large $l$ and pick and integer $z$ such that $|\omega_i|$
divides $zR_i$.  Let $\Theta=\Omega\{z\}$, $Z_i=zR_i$, $i=1,2$, set
$M\dg{I}(w,n_1,n_1)\dg{\Theta\{l\}}(w,n_1+Z_1l,
n_2+Z_2l)\df{{\tt1}}(q',n'_1,n'_2)$ and consider the following cases,
depending on $w$.

First suppose $\omega_1\omega_2\leq0$. Suppose $\omega_i\geq0$ and use
Lemma~\ref{fstg}, property~(a) to conclude that $n'_{3-i}=0$,
$n'_i=n_i+lZ_i+\omega_i\lfloor(n_{3-i}+lZ_{3-i})/|\omega_{3-i}|\rfloor+r
=n_i+lZ_i+\omega_il(Z_{3-i}/|\omega_{3-i}|)+\omega_i\lfloor
n_{3-i}/|\omega_{3-i}|\rfloor+r=
l(Z_i+\omega_i(Z_{3-i}/|\omega_{3-i}|))+B=Al+B$ where $A$ and $B$
depend only on $n_1$, $n_2$, and $w$. The state $q'$ does not depend
on the value of $l$.

Second suppose $\omega_1\omega_2>0$ and use (b) of
Lemma~\ref{fstg}. If $Z_u/|\omega_u|=Z_{3-u}/|\omega_{3-u}|$ then
${}_+n'$ does not depend on $l$ (see below) so we can assume
${}_+n'=u$. Then Lemma~\ref{fstg} implies that
$n'_u=n_u+lZ_u-|\omega_u|\lfloor(n_{3-u}+lZ_{3-u})/|\omega_{3-u}|\rfloor+r
=n_u+lZ_u-|\omega_u|l(Z_{3-u}/|\omega_{3-u}|)+\omega_u\lfloor
n_{3-u}/|\omega_{3-u}|\rfloor+r=0l+B$ where $B$ does not depend on
$l$. Since $n_u+lZ_u=n_u+|\omega_u|l(Z_u/|\omega_u|)=n_u+m|\omega_u|$
and
$n_{3-u}+lZ_{3-u}=n_{3-u}+|\omega_{3-u}|l(Z_{3-u}/|\omega_{3-u}|)=n_{3-u}+m|\omega_{3-u}|$
the state $q'$ or ${}_+n'$ does not depend on $l$ according to the
remark at the end of the statement for case~(b) of Lemma~\ref{fstg}.
 
Finally suppose $Z_i/|\omega_i|>Z_{3-i}/|\omega_{3-i}|$ for some
$i\in\{1,2\}$. If $l$ is large enough then
$n_i+lZ_i=n_i+|\omega_i|l(Z_i/|\omega_i|)=n_i+al|\omega_i|$ and
$n_{3-i}+lZ_{3-i}=n_{3-i}+|\omega_{3-i}|l(Z_{3-i}/|\omega_{3-i}|)=n_{3-i}+bl|\omega_{3-i}|$
with $(a-b)l>D$ where $D$ is given by Lemma~\ref{fstg}(1)(b).  Thus
${}_+n'=i$ for all such $l$ and
$n'_i=n_i+lZ_i+\omega_i\lfloor(n_{3-i}+lZ_{3-i})/|\omega_{3-i}|\rfloor+r
=n_i+lZ_i+\omega_il(Z_{3-i}/|\omega_{3-i}|)+\omega_i\lfloor
n_{3-i}/|\omega_{3-i}|\rfloor+r=
l(Z_i+\omega_i(Z_{3-i}/|\omega_{3-i}|)+B=Al+B$ where $A$ and $B$
depend only on $n_1$, $n_2$, and $w$.

Thus in all cases $M$ arrives at the configuration
$(q',n'_1,n'_2)=(q', (Al+B)\, 0)$ where $A$ and $B$ are independent of
$l$ and ${}_+n'=i_w$ stays the same for all $l$ larger than some
constant. If $M$ is allowed to continue then following the input of
$\gstop$ it will produce the value of $\displaystyle \ulcorner
I\urcorner+\ttb^m(\ulcorner\Theta\urcorner\,{\ttb^{l\omega}-1\over\ttb-1}
+\ttb^{l\omega})$ where $m$ is the largest integer such that
$\ttb^{m-1}\leq \ulcorner I'\urcorner$ and $\omega$ is the length of
$\Theta$. It is now a simple matter to construct a TCM $M'$ that will
compute the function $f(l)=\ttb^{l\omega}$ contradicting Theorem~3.4
of \cite{ibarra}. $M'$ will first test $l$ to see if it is large
enough to satisfy $(a-b)l>D$. If not, $M'$ will use its finite control
to compute $f(l)$. Otherwise, it will compute the value of $Al+B$ and
place it in the appropriate counter. It will then pass the computation
on to $M$ by putting $Al+B$ in counter $i_w$ and emptying the other
counter. After $M$ is finished, $M'$ will (after inputting $\gstop$
and waiting for $M$ to finish again) compute $\ttb^{l\omega}$ from the
result produced by $M$ which requires only a finite number of
multiplications, additions and divisions by fixed constants. Thus
$R_{3-u}<0$. In this case $R_u=0$ implies that for large enough $l$'s
$n'_+$ is bounded, whereas $M$'s output is unbounded for long enough
$I$'s, a contradiction.
\end{proof}

We can now proceed with the proof of Lemma~\ref{boundt}.

\begin{proof}[Proof of Lemma~\ref{boundt}]
Suppose the contrary and use Lemma~\ref{unper} to find $\Omega$, $w$,
$s+1<v\leq(s+1)^2$, $u\in\{1,2\}$, $R_u>0$ and $R_{3-u}<0$, with the
appropriate properties.  Let $K$ be large enough. Consider what
happens when the input $I\Omega\{m\}$ for small enough $m\geq0$ is
followed by ${\tt 1}\gstop$ (the conditions on $K$ and $m$ are
discussed below).  Just as in the proof of Lemma~\ref{unper}, let
$M\dg{I}(w,n_1,n_1)\dg{\Theta\{l\}}(w,n_1+Z_1l,
n_2+Z_2l)\df{{\tt1}}(q',n'_1,n'_2)$ for appropriately chosen $\Theta$,
$Z_1$ and $Z_2$, let and $\omega_1$ and $\omega_2$ be the constants
provided by Lemma~\ref{fstg}(2) for the last stage of this
computation. Note that these constants are the same for all
configurations $(w,n_1+Z_1l,n_2+Z_2l)$ such that
$n_1+Z_1l,n_2+Z_2l>s+1$.

We assume, as in Lemma~\ref{unper} that both $Z_i$ is a multiple of
$\omega_i$, $i\in\{1,2\}$ (if $\omega_i$ is 0, the appropriate
requirement is considered satisfied automatically). Now consider the
following three cases.

(1) $\omega_{3-u}\geq0$ and $\omega_u<0$. Then
$(w',n'_1,n'_2)=(w',(n'_u=0)\,(n_{3-u}+\omega_{3-u}\lfloor
v/|\omega_u|\rfloor+r))=(w',(n'_u=0)\,(n_{3-u}+B))$ where both $w'$
and $B$ depend only on the value of $v<(s+1)^2$. Moreover,
$(w,n_u+lZ_u\,n_{3-u}+lZ_{3-u})\df{{\tt
    1}}(w',(n'_u=0)\,(n_{3-u}+lZ_{3-u}+l\omega_{3-u}(Z_u/|\omega_u|)+\omega_{3-u}\lfloor
n_u/|\omega_u|\rfloor+r))=(w',(n'_u=0)\,(n_{3-u}+lA+B))$ where the
state $w'$, and the constants $A$ and $B$ depend only on the value of
$v<(s+1)^2$, provided $\lfloor (v+Z_ul)/|\omega_u|\rfloor<\lfloor
(K+Z_{3-u}l-s-1)/|\omega_{3-u}|\rfloor$ (if $\omega_{3-u}=0$ this
inequality is automatically satisfied).

(2) $\omega_{3-u}<0$ and $\omega_u\geq0$. Then
$(w,n_u+lZ_u\,n_{3-u}+lZ_{3-u})\df{{\tt
    1}}(w',n'_1,n'_2)=(w',(v+lZ_u+\omega_u\lfloor
(n_{3-u}+lZ_{3-u})/|\omega_{3-u}|\rfloor+r)\,(n'_{3-u}=0))=(w',(
v+lZ_u+l\omega_u(Z_{3-u}/|\omega_{3-u}|)+\omega_u\lfloor
n_{3-u}/|\omega_{3-u}|\rfloor+r)\,(n'_{3-u}0)=(w',
(v+lA+\omega_um+r)\,(n'_{3-u}=0))$ where $A$ does not depend on $n_1$
or $n_2$, $m=\lfloor n_{3-u}/|\omega_2|\rfloor$, and $r$ depends only
on $(n_{3-u}\bmod |\omega_2|)$ which is bounded by $(s+1)$.

(3) $\omega_u<0$ and $\omega_{3-u}<0$. Using (b) of Lemma~\ref{fstg}
choosing $l$ and $k$ so that $\lfloor
(v+Z_ul)/|\omega_u|\rfloor<\lfloor
(K+Z_{3-u}l-s-1)/|\omega_{3-u}|\rfloor$ one can show that
$(w,n_u+lZ_u\,n_{3-u}+lZ_{3-u})\df{{\tt
    1}}(w',n'_1,n'_2)=(w',(n'_u=0)\,n_{3-u}+lZ_{3-u}+\omega_{3-u}\lfloor(v+lZ_u)/|\omega_u|\rfloor+r)=(w',(n'_u=0)\,
(n_{3-u}+lZ_{3-u}+l\omega_{3-u}(Z_u/|\omega_u|)+\omega_{3-u}\lfloor
v/|\omega_u|\rfloor+r)=(w', v+lA+B\,0)$ where $A$ does not depend on
$n_1$ or $n_2$, $B$ depends only on $v$.

Construct an infinite sequence of strings $\{\,I_n\mid n\in{\mathbb
  N}\,\}$ satisfying the following properties. There are $b<s+1$, and
$c$ such that for each $j$ $M\dg{I_j}(w,v\,n_{3-u}(j))$ where
$n_{3-u}(j)>j$, either $\omega_{3-u}=0$ or the remainder
$b=n_{3-u}\bmod|\omega_{3-u}|$, and $c=n_{3-u}(j) \bmod
|A\omega_{3-u}|$ provided $|A\omega_{3-u}|$ is not 0. Note that all
$(w,v\,n_{3-u}(j))$ satisfy the same property (1)--(3) mentioned
above.

Note that the conditions above and the cases~(1)--(3) discussed before
the previous paragraph imply for some state $w'$
$(w,v\,n_{3-u}(j))\df{{\tt 1}\gstop}(w', m_1, m_2)$.

Using Lemma~\ref{lprefix} to ensure that all the configurations $M$
enters after $M\dg{I_j}(w,v\,n_{3-u}(j))$ have at least one counter
larger than $s+1$ and taking a large enough $j$ consider the output of
$M$ given the input $I_j{\tt 1}\gstop$. Using Lemma~\ref{linper} find
the constants $P$, $Q$, $R$, and $D$ with the property that the output
of $M$ is $P\lfloor m_+(j)/Q\rfloor +R$ where $(w',m_1(j),m_2(j))$ is
the first configuration $M$ enters after $I_j{\tt 1}$ has been input
such that $m_-(j)=0$.

By taking a multiple of $D$ we can assume that every appropriate
constant mentioned below divides $D$. Next revisit the three cases
above.

(1) Find $k$ large enough so that one can find $l$ and $l+l'$ as
described below, where $l'>2$ that are both solutions of
$n_{3-u}(k)+lA+B=n_{3-u}(j)+B+mD$ for some $m$.  To see that this is
possible note that $\omega_{3-u}\not=0$ so either $A=0$ (which is
impossible) or $c=n_{3-u}(k)\bmod |A\omega_{3-u}|=n_{3-u}(j)\bmod
|A\omega_{3-u}|$.  Let $D=D'|\omega_{3-u}A|$ and let
$n_{3-u}(k)=n_{3-u}(j)+m'|\omega_{3-u}A|+m''D$ where $m'<D'<D$. Now
put $l=(D'-(\sgn A)m')|\omega_{3-u}|$, $l+l'=(2D'-(\sgn
A)m')|w_{3-u}|$. It is easy to see that $l,l+l'\leq 3D|\omega_{3-u}|$
so the existence of a large enough $k$ so that the conditions in
case~(1) above are satisfied is immediate.  Then, on the one hand, $M$
should output $P\lfloor (n_{3-u}(k)+lA+B)/Q\rfloor+R$ using the choice
of $P$, $Q$, $R$, and $D$. One the other hand, $M$ will output
$\displaystyle \ulcorner I_k\Omega\{l\}{\tt 1}\urcorner=UE^l+V$ where $\displaystyle
U={\ttb^e(\ulcorner\Omega\urcorner+\ttb^f-1)\over \ttb^f-1}$,
$\displaystyle V=\ulcorner I_k\urcorner-{\ulcorner\Omega\urcorner
  \ttb^e\over \ttb^f-1}$, $E=\ttb^f$, and $e$ is the length of $I_k$
and $f$ is the length of $\Omega$. Thus $P\lfloor
(n_2(k)+lA+B)/Q\rfloor+R=UE^l+V$ and $P\lfloor
(n_k(k)+(l+l')A+B)/Q\rfloor+R=UE^{l+l'}+V$ but $(P\lfloor
(n_k(k)+(l+l')A+B)/Q\rfloor+R)-(P\lfloor (n_2(k)+lA+B)/Q\rfloor+R)\leq
l'PA/Q+1$ while $(UE^{l+l'}+V)-(UE^l+V)\geq Ul'$. Since $P$, $Q$, and
$A$ are fixed and $U$ increases indefinitely as the length of $I_k$
increases choosing a large enough $k$ results in a contradiction.

(2) Find $k$ large enough so that one can find small enough $l$ and
$l+l'$, where $l'>2$ that are both solutions of $a+lA+\omega_u\lfloor
n_{3-u}(k)/|\omega_{3-u}|\rfloor+r=a+\omega_u\lfloor
n_{3-u}(j)/|\omega_{3-u}|\rfloor+r+mD$ for some $m$. That such
solutions exist follows from $\omega_u\lfloor
n_{3-u}(k)/|\omega_{3-u}|\rfloor=\omega_u(n_{3-u}(k)-n_{3-u}(k)\bmod|A\omega_{3-u}|)/|\omega_{3-u}|+\omega_u\lfloor
n_{3-u}(k)\bmod|A\omega_{3-u}|/|\omega_{3-u}|\rfloor=Am'+\omega_u\lfloor
c/|\omega_{3-u}|\rfloor$ and $\omega_u\lfloor
n_{3-u}(j)/|\omega_{3-u}|\rfloor=Am''+\omega_u\lfloor
c/|\omega_{3-u}|\rfloor$ for some integer $m'$ and $m''$. The rest is
similar to case~(1).

(3) Similar to case~(1) with $n_u(\cdot)$ replacing $n_{3-u}(\cdot)$.
\end{proof}

The following theorem is a corollary of Lemma~\ref{boundt}.
\begin{theorem}\label{tcmired}
If there exists a TCMI that counts in some radix $\ttb\geq2$ then
there exists a TCMI $M$ that counts in $\ttb$ such that
$M\dg{I}(w_I,n^I_1, n^I_2)$ where $n^I_-=0$ for every input $I$.
\end{theorem}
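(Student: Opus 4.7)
The strategy is to use Lemma~\ref{boundt} to fold the value of the bounded smaller counter into the finite control of a new machine $M$. Let $M'$ be a TCMI counting in radix $\ttb$ and let $K$ be the constant supplied by Lemma~\ref{boundt}. I would take the waiting states of $M$ to be triples $(q,s,u)$, where $q$ ranges over waiting states of $M'$, $s\in\{0,1,\ldots,K-1\}$ is the recorded value of the smaller counter at that wait, and $u\in\{1,2\}$ is its index. The invariant to maintain is: whenever $M$ sits in $(q,s,u)$ after consuming input prefix $I$, $M$'s counter $u$ is empty, $M$'s counter $3-u$ stores $\max\{n_1^I,n_2^I\}$, and $(q,s,u)$ together with these counter values encodes the configuration $M'$ would be in on the same prefix. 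This already yields the required conclusion $n^I_-=0$ at every waiting state.

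The per-digit transition is implemented by three consecutive blocks of $\epsilon$-moves. Upon reading a digit $d$ at $(q,s,u)$, $M$ first unloads $s$ from the finite control back into counter $u$ by a trivial state-counter loop; this is possible because $s<K$ is already stored in the state. Next, $M$ performs a step-by-step simulation of $M'$ starting from state $q$ while consuming $d$, using its two counters exactly as $M'$ does, continuing until its finite control recognizes that $M'$ has entered a new waiting state $q'$. Finally, $M$ runs a normalization routine, described below, to restore the invariant before yielding to the next digit.

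The main obstacle lies in the normalization step: $M$ must discover which of its two counters is now the smaller one and record its value $t$ in the finite control, all without destroying the larger counter. I would exploit the bound from Lemma~\ref{boundt} as follows. Starting with a state-counter $t$ initialized to $0$, $M$ tests counter~$1$ for zero, tests counter~$2$ for zero, and, if both tests fail, decrements each counter by one and increments $t$, then repeats. By Lemma~\ref{boundt} one of the counters must test as zero before $t$ reaches $K$, so this loop terminates entirely within the finite control. When counter~$u'$ is the one that tests empty at state-counter value $t$, $M$ re-increments counter~$3-u'$ exactly $t$ times through another bounded state-counter loop, restoring the $t$ units it had decremented, and transitions to the waiting state $(q',t,u')$. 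The invariant is preserved, the simulation of $M'$ is faithful on every input, so $M$ outputs the same value as $M'$ at each wait state while satisfying $n^I_-=0$ by construction.
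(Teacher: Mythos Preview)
Your proposal is correct and follows the same idea as the paper, which dispatches the theorem in a single sentence (``use Lemma~\ref{boundt} to find $K$ and change the finite control of the original TCMI so that all counter values less than $K$ are kept in buffers in the finite control''). You have simply unpacked that sentence into an explicit construction: store the smaller counter value (bounded by $K$) in the state at each wait, reload it before simulating the next digit, and use a bounded parallel-decrement loop to re-normalize afterward---all of which is exactly what ``keeping small values in the finite control'' amounts to.
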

\begin{proof}
Use Lemma~\ref{boundt} to find $K$ and change the finite control of the
original TCMI so that all counter values less than $K$ are kept in the
buffers in the finite control.
\end{proof}

The utility of the theorem above lies in the idea that each
computation performed by a TCMI between inputting two successive
digits is a computation of a new value based on the intermediate
result of the previous computation, i.e.\ only one value is needed to
represent the intermediate result. Hence, if the TCM `responsible' for
each digit can be
simulated by some other machine under the assumption that it starts
with one empty register, the TCMI can now also be simulated by a family of
such machines.

Paper~\cite{Schroep} introduced the concept of a so-called {\it more
  powerful one register machine\/} or MP1RM. This machine with a
somewhat tongue-in-cheek name has a single register $r$ and allows the
following operations: $r\leftarrow r+K$, $r\leftarrow r\times K$, 
{\bf if $r-1\geq0$ then $r\leftarrow r-1$ else goto $l$}, {\bf
  $\{\,r\leftarrow r\div K${\rm;} goto $S(r\bmod K)\,\}$}, {\bf goto $l$}, {\bf
  halt}. We can now extend the definition of MP1RM by adding `{\bf if
$I=d$ then goto $l$}' commands, analogous to the way 
a TCMI is derived from a TCM. Naturally there are only finitely
many constants $K$ and $S(K)$ allowed.

Thus extended, an MP1RM becomes an
{\it MP1RM with input\/} or MP1RMI. Other definitions (such as counting
in some radix) can be naturally extended to MP1RMIs as
well. R.~Schroeppel shows in \cite{Schroep} that every TCM can be
emulated with an MP1RM.
\begin{theorem}\label{tcexact}
There is no MP1RMI (therefore no TCMI) that counts online in some
radix $\ttb\geq2$.
\end{theorem}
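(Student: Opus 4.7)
The plan is to derive a contradiction by pigeonholing on the finite set of wait states of $M$, exploiting the redundancy of LSB-first representations with respect to trailing zeros. The central observation is that the online property forces the register to hold the represented value at every wait state reached while reading a valid input, so if two prefixes of valid inputs yield the same wait state and the same register value, then any common suffix must produce the same final register.

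Suppose for contradiction that some MP1RMI $M$ counts online in~$\ttb$, and consider the inputs ${\tt 1}{\tt 0}\{k\}$ for $k\ge 0$. Each of these (read LSB first) represents $n=1$, so after processing ${\tt 1}{\tt 0}\{k\}$ the machine sits at some wait state $w_k$ with register value $1$. Because the set of wait states is finite, pigeonhole yields $k_1<k_2$ with $w_{k_1}=w_{k_2}$, giving literally identical configurations at those two moments. Now append ${\tt 1}\gstop$ to each input: the extensions ${\tt 1}{\tt 0}\{k_1\}{\tt 1}\gstop$ and ${\tt 1}{\tt 0}\{k_2\}{\tt 1}\gstop$ are canonical representations of $1+\ttb^{k_1+1}$ and $1+\ttb^{k_2+1}$ respectively, and these values are distinct since $\ttb\ge 2$. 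Deterministic processing of the common suffix from identical starting configurations forces the final registers to coincide, while the online counting requirement forces them to equal these two different numbers. Contradiction.

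The main obstacle is justifying that a string like ${\tt 1}{\tt 0}\{k\}$, which has trailing zeros and is not the canonical representation of~$1$, must be treated as a valid input representing~$1$. This matches the natural reading of the definition of ``counts in $\ttb$'' and parallels the positive result Proposition~\ref{mosttoleast}, whose explicit construction handles arbitrary digit strings, including ones with redundant digits. Under a stricter canonical-only reading one would instead analyze the non-canonical prefixes ${\tt 0}\{k\}$ of the canonical inputs ${\tt 0}\{k\}{\tt 1}$ for $\ttb^k$: pigeonholing on the pair consisting of the wait state reached after the prefix and the wait state reached after the final ${\tt 1}$, and invoking Lemma~\ref{linper} to control the MP1RM computation between them, one again forces a collision between deterministic outputs that are forced to be distinct.
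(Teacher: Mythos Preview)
Your main argument is correct and dramatically simpler than the paper's, provided one accepts the all-strings reading of ``counts in~$\ttb$'': if $M$ must output $\ulcorner I\urcorner$ for \emph{every} digit string $I$, then the online property forces the register to equal~$1$ after each ${\tt 1}{\tt 0}\{k\}$, pigeonhole on the finitely many wait states gives identical configurations for two distinct $k$, and appending ${\tt 1}$ yields the contradiction immediately. This reading is consistent with Theorems~\ref{mmach} and~\ref{tcmio}, so as a proof of the theorem \emph{as used in the paper} your argument stands.

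The paper takes a genuinely different route that does not feed $M$ any non-canonical string. It pigeonholes on wait states reached after canonical inputs for numbers of the form $\sum_i B^{n_i}$, where $B$ is a common multiple of $\ttb$ and every constant in $M$'s instruction set; then it argues that processing one further digit {\tt 1} is a computation of bounded length $L$ whose only branching is on residues modulo divisors of $B$, so that adding high-order terms $B^{n_j}$ with $n_j>L$ leaves the control flow unchanged and acts on the register by a fixed affine map $r\mapsto (P/Q)r+c$. Comparing this with the required exponential jump in the output gives the contradiction. This buys a stronger statement: even a machine only promised correct behaviour on canonical inputs cannot count online.

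Your fallback sketch for the canonical-only case is not a proof. Pigeonholing on the pair of wait states before and after the final {\tt 1} in ${\tt 0}\{k\}{\tt 1}$ does not control the register values $r_{k_1},r_{k_2}$ reached after the non-canonical prefixes ${\tt 0}\{k_i\}$, so you do not get identical configurations, and you give no mechanism to force the subsequent computations to collide. Lemma~\ref{linper} is the wrong tool here: it concerns multi-stage nTCM runs starting from an empty counter, not the single-digit MP1RM step you need to analyse. What is actually required is precisely the paper's bounded-length, residue-controlled analysis of that step, which your sketch does not supply.
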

\begin{proof}
Suppose such a machine $M$ exists. Let $B$ be the product of all the
constants mentioned in the instruction set of $M$
(see the description of an MP1RM above) as well as the input radix, $\ttb$. 
First show that there exists a strictly
increasing finite sequence $n_1, n_2,\ldots, n_k$, and a waiting state $q$ of
$M$ such that $M$ enters $q$ after counting to $\sum_{i=1}^kB^{n_i}$
(which can be picked arbitrarily large) as
well as after counting to $\sum_{i=1}^kB^{n_i}+\sum_{i=k+1}^{K}B^{n_i}$ for arbitrarily
large $n_{k+1}$ and some strictly increasing $n_{k+1},\ldots,n_K$. If
there is no such $q$, one can construct, by induction, a finite
increasing sequence $n_1,\ldots,n_T$ such that for any $n>n_T$ and any
waiting state $q$ of $M$, the waiting state $M$ enters after counting
to $\sum_{i=1}^TB^{n_i}+B^n$ is different from $q$. A contradiction. 

Let $q$ be the waiting state mentioned above, let $n_1,n_2,\ldots,n_k$ 
be the appropriate sequence and let $L$ be the number of steps
it takes $M$ to process the next input digit, 1, after it has counted to
$\sum_{i=1}^kB^{n_i}$. Thus, after starting in $q$ and
$\sum_{i=1}^kB^{n_i}$ in its only register, after inputting 1, $M$
will output (in the register) $\sum_{i=1}^kB^{n_i}+\ttb^S$ where $S$ is
the smallest power such that $\ttb^S>\sum_{i=1}^kB^{n_i}$.

Pick $n_{k+1}>\max\{L,n_k,R\}$, where $R$, to
be specified later,
depends only on the computation $M$ performs to input the last digit
of $\sum_{i=1}^kB^{n_i}+\ttb^S$, such that $M$
enters $q$ after counting to $\sum_{i=1}^kB^{n_i}+\sum_{i=k+1}^{K}B^{n_i}$
for some $n_{k+2},\ldots,n_K$. Note that the only (meaningful) 
branching can be performed by the various $\ldots\div K$ instructions. By
arranging $\sum_{i=1}^kB^{n_i}$ to be large enough we can assume that the
sole register of $M$ is never equal 0 during the processing of the
last input digit of
$\sum_{i=1}^kB^{n_i}+\ttb^S$. It is easy to see that after counting to
$\sum_{i=1}^kB^{n_i}+\sum_{i=k+1}^{K}B^{n_i}$ and inputting 1, $M$ must
stop after the same $L$ steps as it did after counting to
$\sum_{i=1}^kB^{n_i}$ and inputting 1. Indeed, $M$ is in the same
waiting state $q$ before inputting 1 in both cases. Now, our choice of $B$ and
$n_{k+1}>L$ imply that the remainders of the $\ldots\div K$
instructions are unaffected by the addition of $\sum_{i=k+1}^{K}B^{n_i}$.

Therefore, after counting to $\sum_{i=1}^kB^{n_i}+\sum_{i=k+1}^{K}B^{n_i}$
and inputting 1, $M$ will count to
$\sum_{i=1}^kB^{n_i}+\ttb^S+(P/Q)\sum_{i=k+1}^{K}B^{n_i}$ for some integer
$P$ and $Q$ such that $Q$ divides $\sum_{i=k+1}^{K}B^{n_i}$. Note that
both $P$ and $Q$ are determined by the computation $M$ performs after
counting to $\sum_{i=1}^kB^{n_i}$ and inputting 1. Thus choosing
$R$ large enough, it can be arranged that
$\sum_{i=1}^kB^{n_i}+\ttb^S+(P/Q)\sum_{i=k+1}^{K}B^{n_i}\not=
\sum_{i=1}^kB^{n_i}+\sum_{i=k+1}^{K}B^{n_i}+\ttb^{S'}$, where $S'$ is the
smallest such that $\ttb^{S'}>
\sum_{i=1}^kB^{n_i}+\sum_{i=k+1}^{K}B^{n_i}$. One way to see this is by
noting that if $n_{k+1}$ is large enough so that the representation of
$(1/Q)B^{n_{k+1}}$ in radix $\ttb$ has more than $S$ initial zeroes then the
required property holds. 
\end{proof}

The theorem above can be extended in a number of ways. With a slightly
more complicated technique one can show that if the intermediate
values differ from the online values by a bounded constant, the same
result holds, i.e.~such a machine cannot count in $\ttb$, or if the
TCM stage that processes $\gstop$ has a bounded number of {\it
  reversals\/} (or stages as they are called above), the result still
holds. If, on the other hand, there are no other restrictions on this
post-processing stage, new ideas are needed to establish the
nonexistence of such machines. Note that computations
with MP1RMs can be very (potentially infinitely) long. In fact, the history
of the famous {\it Collatz conjecture\/} or the $3n+1$ problem (see
\cite{Leht} for a discussion and a bibliography) shows that
even analyzing the length of the computation performed by the simplest
MP1RMs is very hard.

The last simple result shows that in order for TCMIs to be able to
perform computations of the same level of sophistication as TCMs
when the input is presented starting with the least significant digit,
they need to be able to count in the appropriate radix. This can be
viewed as a `factorization' theorem where a computation on the input
is split into a decoding stage followed by the actual computation.
\begin{theorem}\label{mmach}
Suppose, for a total function $f$ with an unbounded range there exists a
TCM $M_f$ that computes $f$. If $\ttb$ is some radix and there is a TCMI $M$ such that
$M\dg{I\gstop}(w,f(\ulcorner I\urcorner),0)$ for
every string of $\ttb$-digits then there
exists a TCMI $M'$ that counts in $\ttb$. 
\end{theorem}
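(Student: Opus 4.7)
The plan is to construct $M'$ by combining the TCMI $M$ (which, by hypothesis, computes $f(n)$ from the LSB-first digits of $n$) with the TCM $M_f$ (which computes $f$ from its counter argument). The key enabling tool is the Minsky simulation technique of \cite{Minsky} and \cite{Bardz}, which lets us encode a bounded number of logical counters as exponents of distinct primes inside a single physical super-counter, using the other physical counter as scratch space. This provides the working room needed to run $M$ and $M_f$ as subroutines of $M'$ while reading and then post-processing the digits of $n$.

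In a first phase, as $M'$ reads the LSB-first digits of $n$ followed by $\gstop$, it simulates $M$ on the same input inside the super-counter, so that after $\gstop$ is consumed one of the logical slots holds $f(n)$. In parallel, $M'$ records the input length $L$ in another slot and retains a constant amount of auxiliary, easily-computable invariants of the digit sequence (for instance a bounded prefix kept in the finite control, or residues of $n$ modulo a few fixed small integers). In the second phase, which consists entirely of $\epsilon$-moves triggered by $\gstop$, $M'$ enumerates candidates $k = 0, 1, \ldots, \ttb^L - 1$ in a dedicated slot; for each $k$ it invokes a simulated copy of $M_f$ to produce $f(k)$ and compares the result with the stored $f(n)$. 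Candidates passing this test are further filtered against the recorded invariants, and the unique survivor is then materialized into one of the two physical counters, yielding the required counting behaviour.

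The main obstacle is that $f$ is not assumed to be injective, so the bare condition $f(k) = f(n)$ does not by itself pick out $n$ among the finitely many preimages of $f(n)$ in $[0, \ttb^L)$. This is the technical heart of the argument: one must exhibit a fixed finite collection of supplementary invariants of $n$, each extractable from the LSB-first digit stream by a TCMI running alongside the simulation of $M$, that together with $f(n)$ uniquely determines $n$. The hypothesis that $f$ is TCM-computable in the plain sense of \cite{Schroep} and \cite{Bardz}, rather than merely recursive, is crucial here because it severely restricts the possible preimage structure of $f$; combined with the unboundedness of its range, this restriction is what should ultimately permit the disambiguation step to be carried out within the two-counter model.
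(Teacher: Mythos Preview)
Your proposal has a genuine gap at precisely the point you yourself flag as ``the technical heart of the argument'': the disambiguation step. You correctly observe that $f$ need not be injective, so the condition $f(k)=f(n)$ alone does not isolate $n$ among the candidates $k<\ttb^L$. You then assert that a fixed finite family of auxiliary invariants (bounded prefix, residues modulo small constants) together with $f(n)$ should determine $n$ uniquely, but you do not exhibit such a family or prove that one exists. The invariants you suggest are all of the form ``$n\bmod C$'' for fixed $C$, and these cannot distinguish $n$ from $n+C$; nothing in your argument rules out $f(n)=f(n+C)$ for infinitely many $n$. Appealing vaguely to ``the possible preimage structure'' of TCM-computable functions is not a proof, and in fact the structural input needed here is rather specific.

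The paper's proof avoids the disambiguation problem entirely by invoking a concrete structural result about TCM-computable total functions (Theorem~3.4 of \cite{ibarra}): there exist constants $a,b,c$ with $c>0$ such that $f(a+bn)=f(a)+cn$ for all $n$. On the arithmetic progression $\{a+bn\}$ the function $f$ is affine and hence \emph{injective}, so one never needs to search preimages. The construction of $M'$ is then direct: using only the finite control, convert the incoming LSB-first digit stream of $\ulcorner I\urcorner$ on the fly into the LSB-first digit stream of $a+b\ulcorner I\urcorner$ (a bounded-carry digitwise computation), feed those digits to $M$, obtain $f(a+b\ulcorner I\urcorner)=f(a)+c\ulcorner I\urcorner$ in a counter, then subtract $f(a)$ and divide by $c$. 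No super-counter encoding, no enumeration of candidates, and no post-hoc filtering are required. If you want to repair your approach, the Ibarra linearity result is exactly the missing ingredient; but once you have it, the brute-force search becomes unnecessary.
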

\begin{proof}
Using Theorem~3.4 of \cite{ibarra} find $a$, $b$, and $c$ such that
$f(a+bn)=f(a)+cn$ for all $n$. Note that it follows from the proof of
\cite{ibarra}, Theorem~3.4 that $c>0$. It is not very difficult to create a finite
control that for any input $I$ holds $2m$ most significant digits of the expression
$a+b\ulcorner I'\urcorner$ for any initial substring $I'$ of $I$ and
correctly updates them upon the input of 
the next digit $d$ of $I$. The constant $m$ is chosen large enough so
that all the relevant constants ($a$, $b$, $c$, $\ttb$ and some of
their products and sums) can be represented in under $m$
$\ttb$-digits. Note that the $m$ least significant digits held by the
finite control are the actual digits of the value of $a+b\ulcorner
I\urcorner$. Now the TCMI $M'$ will utilize the finite control just
built to feed the digits of $a+b\ulcorner I\urcorner$ to the input of
$M$ with some delay and wait for $M$ to compute $f(a+b\ulcorner
I\urcorner)=f(a)+c\ulcorner I\urcorner$. $M'$ then subtracts $f(a)$
and divides by $c$ to compute $\ulcorner I\urcorner$.
\end{proof}

Using the $M_f$ in the statement of the previous theorem as a
`back-end' to $M$ to convert between $n$ and $f(n)$ (and bypassing
$M_f$ in the case of a leading ${\tt 0}$ which can be flagged by $M$'s
finite control) one can show that $M$ cannot compute the value of
$f(n)$ online.

\begin{theorem}\label{tcmio}
Suppose, for a total function $f$ with an unbounded range there exists a
TCM $M_f$ that computes $f$. Then for any radix $\ttb$ there is no TCMI $M$ such that
$M\dg{I}(w,f(\ulcorner I\urcorner),0)$ for every string of
$\ttb$-digits.
\end{theorem}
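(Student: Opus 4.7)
The plan is to turn a hypothetical online $M$ computing $f$ into a TCMI that counts in $\ttb$ with only bounded post-$\gstop$ activity, contradicting the strengthened form of Theorem~\ref{tcexact} noted in the remark following its proof.

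Assume for contradiction that $M\dg{I}(w,f(\ulcorner I\urcorner),0)$ holds for every string of $\ttb$-digits. Following the proof of Theorem~\ref{mmach}, I would apply Theorem~3.4 of \cite{ibarra} to fix integers $a$, $b$ and $c>0$ with $f(a+bn)=f(a)+cn$ for every $n\geq 0$, and then build a TCMI $M'$ that, on input $I$ (least significant digit first), keeps in its finite control a buffer of some constant size $m$ holding the topmost pending digits of $a+b\ulcorner I'\urcorner$ for the prefix $I'$ already read, feeding the already-stabilized lower-order digits to the simulated $M$ one at a time. Choosing $m$ to exceed every carry chain produced by $n\mapsto a+bn$ absorbs all carries inside the finite control. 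When the incoming input digit is ${\tt 0}$ the lower digits of $a+b\ulcorner I'\urcorner$ do not change and the back-end $M_f$ need not be invoked; this is the leading-${\tt 0}$ bypass flagged by the finite control and mentioned in the paragraph preceding the theorem.

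When $\gstop$ appears, $M'$ has only three residual tasks: flush the $O(1)$ remaining buffered digits into $M$, let $M$'s own $\gstop$-processing run (which changes nothing because $M$ is online), and extract $\ulcorner I\urcorner$ from the value $f(a)+c\ulcorner I\urcorner$ now sitting in $M$'s first counter. The extraction can be done either as constant subtraction of $f(a)$ followed by division by the constant $c$, or equivalently by driving $M_f$ to enumerate $f(a+b\cdot 0),\ f(a+b\cdot 1),\ \ldots$ until it matches what $M$ produced. Each of these tasks contributes only boundedly many counter stages in the sense of Lemma~\ref{fstg}, so the resulting $M'$ would count in $\ttb$ with a $\gstop$ phase consisting of a bounded number of stages, which is exactly what the strengthened Theorem~\ref{tcexact} rules out.

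The main obstacle is to verify the invariant that at every wait state of $M'$, $M$'s state correctly encodes $f$ of the stabilized portion of $a+b\ulcorner I'\urcorner$ fed so far. Once $m$ exceeds the maximum carry length of $n\mapsto a+bn$ this reduces to finite-control bookkeeping organized around the leading-${\tt 0}$ bypass as a base case; however, care is needed to ensure that the interaction between the $O(1)$ flush digits and $M$'s own per-digit $\epsilon$-moves does not secretly inflate the $\gstop$ stage with an unbounded number of counter reversals, which is where an appeal to the nTCMI normalization of Lemma~\ref{linper} is likely to be required.
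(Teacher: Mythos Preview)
Your overall strategy—reuse the Theorem~\ref{mmach} construction together with the $M_f$ back-end and the leading-${\tt 0}$ bypass—is the same as the paper's, but you diverge from the paper at the crucial point and the divergence leaves a real gap.

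The paper does \emph{not} aim for a counter with a bounded-reversal $\gstop$ phase and then invoke the (unproven) strengthening of Theorem~\ref{tcexact}. It builds a TCMI that counts \emph{online} and contradicts Theorem~\ref{tcexact} as actually proved. The point of using $M_f$ as a ``back-end'' is to perform the $n\leftrightarrow f(n)$ conversion at \emph{every} wait state, not only after $\gstop$: before waiting, $M'$ runs the affine extraction $(x-f(a))/c$ so that the counter holds $\ulcorner I'\urcorner$; when the next digit arrives, $M'$ undoes this (multiply by $c$, add $f(a)$) and, if needed, applies $M_f$ to restore $M$'s configuration before feeding the new digit. The leading-${\tt 0}$ bypass is precisely the observation that when $d={\tt 0}$ one has $\ulcorner I'{\tt 0}\urcorner=\ulcorner I'\urcorner$, so no back-end call is needed. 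With this per-step conversion, nothing at all happens after $\gstop$, and Theorem~\ref{tcexact} applies directly.

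Your version defers everything to the $\gstop$ phase, and there the argument breaks. Flushing the $O(1)$ buffered digits into $M$ means running $M$'s per-digit TCMs on counter values that grow with $|I|$; the number of stages in such a TCM run is not bounded by any constant independent of the input, so the $\gstop$ phase of your $M'$ does \emph{not} have boundedly many reversals. You flag this yourself, but the appeal to nTCMI normalization does not help: Lemma~\ref{linper} gives an affine description of the output \emph{given} the number of stages, it does not bound that number. Consequently the strengthened Theorem~\ref{tcexact} (which, in any case, is only asserted in the paper, not proved) would not even apply to your $M'$. Also note that your ``equivalently'' alternative—enumerating $f(a+b\cdot 0),f(a+b\cdot 1),\ldots$ until a match—is certainly not bounded; it takes $\ulcorner I\urcorner$ iterations. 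The fix is exactly what the paper does: move the conversion inside the per-digit processing so that the machine is already online and no post-$\gstop$ work is required.
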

\begin{proof}
Otherwise using the proof of the previous theorem and the remark that
follows it one can construct a TCMI that counts online in $\ttb$
contradicting Theorem~\ref{tcexact}.
\end{proof}

\section{Open Questions}

We conclude this paper with some questions we have been unable to
answer despite our best efforts.

It is easy to see that the languages $P_{m,n}$ defined in
Section~\ref{autom} are context free if and only if they are
regular. It would be surprising if the answer to the question below is
affirmative, however, the authors do not know how to show that it is negative.

\begin{question}
Do there exist incommensurable radices $\ttd$, $\ttD$, a radix $\ttb$, and a precision $n>0$ 
such that all $P_{m,n}$'s for $\ttd^{m-1}\leq m<\ttd^n$ are regular assuming $n\geq 2$ if $\ttd=2$?
\end{question}

The proof of Theorem~\ref{pdacs} establishes that neither $L_1$ nor $M_1$
is context-free. It is not clear at the moment how to show that all
nontrivial $L_d$'s and $M_d$'s fail to be context-free.
\begin{question}
Given a radix $\ttd>2$ is it true that every $L_d$ and $M_d$,
$0<d<\ttd$ is not a CFL?
\end{question}

It would also be interesting to know more about the computational
power of TCMI.

\begin{question}
Does there exist a TCMI that counts in some radix $\ttb$?
\end{question}
\bibliography{pda2}{}
\bibliographystyle{plain}
\end{document}